\title{Beyond Optimal Fault-Tolerance}
\author{Andrew Lewis-Pye}
\affiliation{%
  \institution{\department{Department of Mathematics} \institution{London School of Economics} \city{London} \country{UK, a.lewis7@lse.ac.uk}}
}
\email{a.lewis7@lse.ac.uk}
\author{Tim Roughgarden}
\affiliation{%
  \institution{\department{Computer Science Department}
    \institution{Columbia University \& a16z Crypto} \city{New York} \country{USA, tim.roughgarden@gmail.com}}
}
\email{tim.roughgarden@gmail.com}
\date{November 2024}
\begin{abstract} 
One of the most basic properties of a consensus protocol is its fault-tolerance---the maximum fraction of faulty participants that the protocol can tolerate without losing fundamental guarantees such as safety and liveness. Because of its importance, the optimal fault-tolerance achievable by any protocol has been characterized in a wide range of settings. For example, for state machine replication (SMR) protocols operating in the partially synchronous setting, it is possible to simultaneously guarantee consistency against $\alpha$-bounded adversaries (i.e., adversaries that control less than an $\alpha$ fraction of the participants) and liveness against $\beta$-bounded adversaries if and only if $\alpha + 2\beta \leq 1$.

This paper characterizes to what extent ``better-than-optimal'' fault-tolerance guarantees are possible for SMR protocols when the standard consistency requirement is relaxed to allow a bounded number $r$ of consistency violations, each potentially leading to the rollback of recently finalized transactions. We prove that bounding rollback  is impossible without additional timing assumptions and investigate protocols that tolerate and recover from consistency violations whenever message delays around the time of an attack are bounded by a parameter $\Delta^*$ (which may be arbitrarily larger than the parameter $\Delta$ that bounds post-GST message delays in the partially synchronous model). Here, a protocol's fault-tolerance can be a non-constant function of $r$, and we prove, for each $r$, matching upper and lower bounds on the optimal ``recoverable fault-tolerance'' achievable by any SMR protocol. For example, for protocols that guarantee liveness against 1/3-bounded adversaries in the partially synchronous setting, a 5/9-bounded adversary can always cause one consistency violation but not two, and a 2/3-bounded adversary can always cause two consistency violations but not three.  Our positive results are achieved through a generic ``recovery procedure''  that can be grafted on to any accountable SMR protocol and restores consistency following a violation while rolling back only transactions that were finalized in the previous $2\Delta^*$ timesteps.
\end{abstract}
\begin{document}

\maketitle


\section{Introduction} 

We consider protocols for the state machine replication (SMR) problem,
in which processes receive transactions from an environment and are
responsible for finalizing a common sequence of transactions. We focus
on the partially synchronous setting~\cite{DLS88}, in which message
delays are bounded by a known parameter $\Delta$ following an unknown
``global stabilization time'' GST (and unbounded until that point).

\vspace{0.2cm} 
The two most basic requirements of an SMR protocol are consistency,
meaning that no two processes should finalize incompatible sequences
of transactions (one should be a prefix of the other), and liveness,
which stipulates that valid transactions should eventually be
finalized (ideally, following GST, within an amount of time
proportional to~$\Delta$). Guaranteeing consistency and liveness
becomes impossible if too many of the processes are faulty (i.e.,
deviate from the intended behavior of a protocol). For the SMR problem
in partial synchrony, it is possible to simultaneously guarantee
consistency against $\alpha$-bounded adversaries (i.e., adversaries
that control less than an $\alpha$ fraction of the participants) and
liveness against $\beta$-bounded adversaries if and only if
$\alpha+2\beta \le 1$  \cite{momose2021multi,DLS88}.

\vspace{0.2cm} 
The focus of this paper is consistency violations---the type of
violation that enables, for example, double-spending a cryptocurrency
native to a blockchain protocol. What can be said about a protocol
when the adversary is large enough to cause a consistency violation?
For example, is it already in a position to cause an unbounded number
of consistency violations (as opposed to just one), or could the
protocol ``fight back'' in some way?

\vspace{0.2cm} 
To make sense of this question and the idea of multiple consistency
violations, we must formalize a sense in which a protocol might
restore consistency following a violation, necessarily by rolling back
transactions that had been viewed as finalized by some non-faulty
processes. One key parameter is then the {\em recovery time}~$d$,
meaning the number of timesteps after a violation before a protocol returns to healthy operation.  A second is the {\em
  rollback}~$h$, meaning that the recovery process ``unfinalizes''
only transactions that have been finalized within the previous $h$
time steps.

The natural wishlist for an SMR protocol in partial synchrony would
then be:
\begin{itemize}

\item [(1)] All of the ``usual'' guarantees, such as optimal fault-tolerance
  (i.e., consistency with respect to $\alpha$-bounded adversaries and
  liveness with respect to $\beta$-bounded adversaries for some
  $\alpha,\beta > 0$ with $\alpha+2\beta=1$).

\item [(2)] Automatic recovery from a consistency violation with the
  worst-case recovery time~$d$ and worst-case rollback~$h$ as small as
  possible (if nothing else, independent of the specific execution).

\item [(3)] Never suffers more than~$r$ consistency violations overall,
  where~$r$ is as small as possible.

\end{itemize}
To what extent are these properties simultaneously achievable?

\vspace{0.2cm} 
This paper provides a thorough investigation of this question. To
expose the richness of the answer, we work with a timing model that
can be viewed as an interpolation between the synchronous and
partially synchronous settings. In addition to the usual
parameters~$\Delta$ and GST (known and unknown, respectively) of the
partially synchronous model, we allow for a known
parameter~$\Delta^* \ge \Delta$ which may or may not bound message
delays prior to GST. Canonically, $\Delta^*$ should be thought of as
orders of magnitude larger than~$\Delta$, with $\Delta$ indicating the speed of
communication between processes when all is well (no network issues,
no attacks) and $\Delta^*$ a safe (and possibly large) upper-bound on the speed of (possibly out-of-band)
communication around the time of an 
attack.\footnote{Indeed, for our positive results, message delays must
  be bounded by~$\Delta^*$ for the duration of our recovery procedure,
  but not otherwise.}
We will be interested  in protocols that always satisfy all the ``usual''
guarantees~(1) and that finalize transactions in time $O(\Delta)$ (rather than $O(\Delta^*)$) after GST, whether or not pre-GST message delays are bounded
by $\Delta^*$, and also satisfy the additional recovery guarantees~(2)
and~(3) in the event that pre-GST message delays are in fact bounded
by~$\Delta^*$.\footnote{In particular, a synchronous protocol with
  respect to the parameter~$\Delta^*$ will not generally satisfy
  consistency and liveness if message delays do not happen to bounded
  above by $\Delta^*$.}

\vspace{0.2cm} 
Our main positive result, stated formally in Theorem~\ref{mt} and
proved in Section~\ref{pmt}, shows that such protocols do indeed
exist. For example, we show that there is a protocol that satisfies:
\begin{itemize}

\item $\tfrac{1}{3}$-resilience in partial synchrony (independent of
  whether $\Delta^*$ bounds pre-GST message delays), with expected latency $O(\Delta)$ after GST;  
  
\item should pre-GST message delays be bounded by~$\Delta^*$, recovers
  from consistency violations in expected time~$O(\Delta^*)$ and with
  rollback $2\Delta^*$; and

\item should pre-GST message delays be bounded by~$\Delta^*$, never
  suffers from more than one consistency violation with a
  $\tfrac{5}{9}$-bounded attacker, and never suffers from more than
  two consistency violations with a $\tfrac{2}{3}$-bounded attacker.

\end{itemize}
We achieve this result by designing a generic ``recovery procedure''
that can be grafted on to any accountable SMR protocol, including
protocols with asymmetric fault-tolerance with respect to consistency
and liveness attacks. Sections~\ref{intuition} and~\ref{formalspec}
give informal and formal, respectively, descriptions of this procedure.

\vspace{0.2cm} 
Our results are tight in several senses. For example, we prove in
Theorem~\ref{ir1} that recovery from a consistency violation
necessarily requires a rollback proportional to the
parameter~$\Delta^*$. In particular, in the pure partially synchronous
model ($\Delta^*=+\infty$, in effect), recovery from consistency
violations with bounded rollback is impossible. Theorems~\ref{ir3}
and~\ref{ir4} show that the bounds we obtain on adversary size (as a
function of the number~$r$ of consistency violations) are optimal. For
example, in the symmetric case above, an attacker controlling
five-ninths of the processes can always force two consistency
violations, and one controlling two-thirds of the processes can cause
unbounded rollback.

\vspace{0.2cm} 
\noindent \textbf{A comment on the benefits of automated recovery}. It
is sometimes assumed in the literature (e.g.\
\cite{sridhar2023better}) that, in the event of a consistency
violation, an `administrator' will somehow (through out-of-protocol
means) remove perpetrators from the system and co-ordinate an
appropriate `reboot'. Our automated recovery procedure is
``in-protocol,'' and therefore has the
significant benefit that it does not rely on a centralized entity,
has no single point of failure,
and 
formalizes a process by which one can guarantee bounded
rollback (a principal focus of this paper).

\vspace{0.2cm} 
\noindent \textbf{A comment on the $(\Delta, \Delta^*)$ timing assumptions}. In practical settings, partial synchrony may hold with respect to $\Delta$ of the order of a few hundred milliseconds. It may also be reasonable to suppose that synchrony will hold with respect to some much larger bound $\Delta^*$, but making direct use of this larger bound to run a synchronous protocol (and increase resilience beyond 1/3) would result in an impractically slow protocol.   Our use of the two bounds $\Delta$ and $\Delta^*$ reflects our interest in considering protocols that give \emph{all} the usual guarantees of protocols for partial synchrony (with no requirement that the larger bound $\Delta^*$ should hold), but which \emph{also} give recovery guarantees in the case that the larger bound for synchrony should hold.

\section{The setup} \label{setup} 

We consider a set $\Pi= \{ p_1,\dots, p_{n} \}$ of $n$ processes. Each
process $p_i$ is told its ``name'' $i$ as part of its input. 
We focus on the case of a static adversary, which chooses a set of
processes to corrupt at the start of the protocol
execution.\footnote{All results in this paper hold more generally for
  adaptive adversaries (with essentially identical proofs), with the
  exception of the bound on the
  expected termination time for the recovery procedure asserted in
part~(iii) of Theorem \ref{mt} (which requires that a random
permutation of the processes be chosen subsequent to the adversary
deciding which processes to corrupt).}
We call a process corrupted by the adversary 
\emph{faulty}. Faulty processes may behave arbitrarily (i.e., we consider
\emph{Byzantine} faults), subject to our cryptographic assumptions
(stated below).  Processes that are not faulty are \emph{correct}.
The adversary is \emph{$\rho$-bounded} if it corrupts less than a
$\rho$ fraction of the $n$ processes.


\vspace{0.2cm} 
\noindent \textbf{Cryptographic assumptions}. 
We assume that processes communicate by point-to-point authenticated
channels and that
a public key infrastructure (PKI) 
is available for generating and validating signatures.
For simplicity of presentation (e.g., to avoid the
analysis of negligible error probabilities), we work with ideal
versions of these primitives (i.e., we assume that faulty processes
cannot forge signatures).
We also assume that all processes have access to a random
permutation of $\Pi$, denoted $\Pi^*:\ [1,n]\rightarrow  \Pi$,  which
is sampled after the adversary chooses which processes to corrupt.


\vspace{0.2cm} 
\noindent \textbf{Message delays}. 
We consider a discrete sequence of
timeslots $t\in \mathbb{N}_{\geq 0}$.
As discussed in the introduction, we consider protocols that operate
in partial synchrony (with some parameter~$\Delta$, perhaps in the order of
seconds or milliseconds) and satisfy additional recovery properties
should synchrony hold (with some different parameter~$\Delta^*$, which may be set much larger than $\Delta$) while running the `recovery procedure'. 



\vspace{0.1cm} 
\noindent \emph{Synchrony}. In the synchronous setting, a message sent at time $t$ must arrive by time $t + \Delta^*$, where $\Delta^*$ is known to the protocol. 

\vspace{0.1cm} 
\noindent \emph{Partial synchrony}. In the partially synchronous setting, a message sent at time $t$ must arrive by time $\max\{\text{GST},t\} + \Delta$. While $\Delta$ is known, the value of GST is unknown to the protocol. The adversary chooses GST and also message delivery times, subject to the constraints already specified.

\vspace{0.2cm} 
\noindent \textbf{Clock synchronization}. In the partially synchronous
setting, we suppose all correct processes begin the protocol execution
before GST. When considering the synchronous setting, we suppose all
correct processes begin the protocol execution by time $\Delta^*$. A
correct process begins the protocol execution with its local clock set
to 0; thus, we do not suppose that the clocks of correct processes are
synchronized. For simplicity, we assume that
the clocks of correct processes all proceed in real time, 
meaning that if
$t'>t$ then the local clock of correct $p$ at time $t'$ is $t'-t$ in
advance of its value at time $t$.\footnote{Using standard arguments, our
  protocol and analysis can easily 
be extended to the case
in which there is a known upper bound on the difference
between the clock speeds of correct processes.}

\vspace{0.2cm} 
\noindent \textbf{Notation concerning executions and received messages}. We use the following notation when discussing any execution of a protocol: 
\begin{itemize} 
\item $M_i(t)$ denotes the set of messages received by process $p_i$ by timeslot $t$;
\item  $M_c(t)$  denotes the set of all messages received by any correct process by timeslot $t$; 
\item $M_c$  denotes the set of all messages received by any correct process during the execution. 
\end{itemize} 
 

\vspace{0.2cm} 
\noindent \textbf{Transactions}. Transactions are messages of a
distinguished form,  signed by the \emph{environment}. Each timeslot, each process may receive some
finite set of transactions directly from the environment.

\vspace{0.2cm} 
\noindent \textbf{Determined inputs}. A  value is \emph{determined} if it known to all processes, and is otherwise \emph{undetermined}. For example, $\Delta$, $\Delta^*$ and $\Pi$ are determined, while GST is undetermined. 



\vspace{0.2cm} 
\noindent \textbf{State machine replication}. 
Informally, a protocol for state machine replication (SMR) must cause correct
processes to finalize \emph{logs} (sequences of transactions) that are
live and consistent with each other, and must also produce `certificates' of finalization that can be presented to clients who may not always be online and observing the protocol execution.  Formally, if $\sigma$ and $\tau$
are sequences, we write $\sigma \preceq \tau$ to denote that
$\sigma$ is a prefix of $\tau$.  We say $\sigma$ and~$\tau$ are
\emph{compatible} if $\sigma\preceq \tau$ or $\tau\preceq \sigma$.
If two sequences are not compatible, they are \emph{incompatible}.
If~$\sigma$ is a sequence of transactions, we write
$\mathtt{tr}\in \sigma$ to denote that the transaction $\mathtt{tr}$
belongs to the sequence $\sigma$.  

Fix a process set $\Pi$ and
\emph{genesis log}, denoted $\mathtt{log}_G$.  If $\mathcal{P}$ is a
protocol for SMR, then it must specify a function $\mathcal{F}$, which
may depend on $\Pi$ and $\mathtt{log}_G$, that maps any set of
messages to a sequence of transactions extending $\mathtt{log}_G$. 
We
require the following conditions to hold in every execution (for any $M_1,M_2,p_i,p_j$ and $\mathtt{tr}$):

\vspace{0.1cm} 
\noindent \emph{Consistency}.   If $M_1\subseteq M_2 \subseteq M_c$,
then
$\mathcal{F}(M_1)\preceq \mathcal{F}(M_2)$.\footnote{This is
  equivalent to the seemingly stronger condition in which $M_c$ is
  replaced by the set of messages received by any process (correct or
  otherwise), as faulty processes always have the option of echoing
  any messages they receive to correct processes.}



\vspace{0.1cm} 
\noindent \emph{Liveness}. If $p_i$ and $p_j$ are correct and if $p_i$ receives the transaction $\mathtt{tr}$ then, for  some  $t$,  $\mathtt{tr}\in \mathcal{F}(M_j(t))$. 

\vspace{0.2cm} This definition of consistency ensures that correct
processes never finalize incompatible logs: for any sets
$M_1,M_2\subseteq M_c$ that two such processes might have received,
$ \mathcal{F}(M_1) \preceq \mathcal{F}(M_1 \cup M_2)$ and
$\mathcal{F}(M_2)\preceq \mathcal{F}(M_1\cup M_2)$.  We say a set of
messages $M$ is a \emph{certificate} for a sequence of transactions
$\sigma$ if $\mathcal{F}(M)\succeq \sigma$: intuitively, any  process can present the set of (potentially signed) messages $M$ to a `client' that has not been observing the protocol execution as proof that it has finalized a sequence of transactions extending $\sigma$.  If we wish to make
$\mathcal{F}$ explicit, we may also say that $M$ is an
$\mathcal{F}$-certificate for $\sigma$.

\vspace{0.2cm} 
\noindent \textbf{SMR (informal discussion)}. The
 selection~$\mathcal{F}$ of finalized transactions by a correct
 process depends only on the set of messages it has received, and not
 on the times at which these messages were received. The motivation
 for this restriction is to formalize the requirement of SMR \cite{schneider1990implementing} thatthat `clients' wishing to verify
 the finality of transactions without
 observing the entire execution of the protocol should be able to do so  (via a suitable certificate). As discussed  in  \cite{sridhar2024consensus}, this crucial distinction between SMR and Total-Order-Broadcast is sometimes overlooked in the literature:  In the context of an honest majority, any report of finality by a majority of processes constitutes proof of finality, so a protocol for Total-Order-Broadcast directly gives a protocol for SMR. However, the distinction becomes non-trivial in the context of player reconfiguration, or if there is no guarantee of honest majority (such as in this paper). In
 partial synchrony, certificates are anyways required for guaranteed
  consistency and liveness~\cite{lewis2021does}, so our approach is general for the context considered here.  




%

\vspace{0.2cm} 
\noindent \textbf{The liveness parameter}. If there exists some fixed $\ell$ that is a function of determined inputs\footnote{The requirement that $\ell$ is not a function of $\Delta^*$ (while $\Delta^*$ is not necessarily $O(\Delta)$) means that having liveness parameter $\ell$ may require finalization of transactions in time less than $\Delta^*$ after GST.} other than $\Delta^*$ and such that the following holds in all executions of $\mathcal{P}$, we say $\mathcal{P}$ has \emph{liveness parameter} $\ell$: If $p_i$ and $p_j$ are correct and if $p_i$ receives the transaction $\mathtt{tr}$ at time $t$ then, for  $t'=\text{max}\{ t,\text{GST} \} +\ell$,  $\mathtt{tr}\in \mathcal{F}(M_j(t'))$.

\vspace{0.2cm} 
\noindent \textbf{Liveness and consistency resilience}. Recall that
$n=|\Pi|$. When the protocol $\mathcal{P}$ is clear from context, we
write $\rho_C$ to denote the consistency resilience of $\mathcal{P}$,
which is the largest $\rho$ such that, for all $n$, the protocol
satisfies consistency so long as the adversary is $\rho$-bounded.  We
write $\rho_L$ to denote the liveness resilience, which is the largest
$\rho$ such that, for all $n$, the protocol satisfies liveness so long
as the adversary is $\rho$-bounded.  It is well known that
$\rho_C+2\rho_L\leq 1$ in the partially synchronous setting \cite{DLS88} and that
$\rho_C+\rho_L\leq 1$ in the synchronous setting \cite{momose2021multi}. 



\vspace{0.2cm} 
\noindent \textbf{The number of consistency violations}. 
When $\mathcal{F}$ is clear from context, we say the set of messages
$M$ \emph{has $r$ consistency violations} if there exist
$M_0 \subset M_1 \subset \dots \subset M_r\subseteq M$ such that, for
each $s\in \{0,1,\ldots,r-1\}$,
$\mathcal{F}(M_s) \not \preceq \mathcal{F}(M_{s+1})$. We also say
$M$ \emph{has a consistency violation} (w.r.t.\ $\mathcal{F}$) if it
has at least one consistency violation. An execution has $r$
consistency violations if $M_c$ has $r$ consistency violations.

\vspace{0.2cm} 
\noindent \textbf{Accountable protocols (informal
  discussion)}. Informally, a protocol is \emph{accountable} if it
produces \emph{proofs of guilt} for some faulty processes in the event
of a consistency violation. We cannot generally require proofs of
guilt for a fraction $\lambda>\rho_C$ of processes, since consistency
violations may occur when less than a fraction $\lambda$ of processes
are faulty. On the other hand, all standard protocols that provide
accountability produce proofs of guilt for a $\rho_C$ fraction of
processes in the event of a consistency violation \cite{sheng2021bft}.

\vspace{0.2cm} 
\noindent \textbf{Accountable protocols (formal definition)}. Consider an SMR protocol $\mathcal{P}$: 
\begin{itemize} 
\item We say the set of messages $M$ is a \emph{proof of guilt} for
  $p\in \Pi$ if there does not exist any execution of $\mathcal{P}$ in
  which $p$ is correct and for which $M\subseteq M_c$.\footnote{If we
    wish to make $\mathcal{P}$, $\mathtt{log}_{G}$, and $\Pi$
    explicit, we may also say that $M$ is a
    $(\mathcal{P},\Pi,\mathtt{log}_G)$-proof of guilt.}

\item We say $\mathcal{P}$ is $\lambda$-\emph{accountable} if the
  following holds at every timeslot $t$ of any execution of
  $\mathcal{P}$: if $M_c(t)$ has a consistency violation, then
  $M_c(t)$ is a proof of guilt for at least a $\lambda$ fraction of
  processes in $\Pi$.
\end{itemize} 

\vspace{0.2cm} Given that all standard protocols that are
$\lambda$-accountable for any $\lambda>0$ are also
$\rho_C$-accountable, we will say that a protocol is
\emph{accountable} to mean that it is $\rho_C$-accountable.  It is
important to note that, while an accountable protocol ensures the
existence of proofs of guilt for a $\rho_C$ fraction of processes in
the event of a consistency violation, it \emph{does not} automatically
ensure \emph{consensus} between correct processes as to a set of
faulty processes for which a proof of guilt exists. One role of the
\emph{recovery} procedure (as specified in Section \ref{formalspec})
will be to ensure such consensus.

\vspace{0.2cm} 
\noindent \textbf{Message gossiping}. In our recovery procedure, it
will be convenient to assume that correct processes gossip all
messages received.  Then, if synchrony does hold with respect to
$\Delta^*$, any message received by correct $p$ at some timeslot $t$
is received by all correct processes by time $t+\Delta^*$.  It will
not generally be necessary to gossip all messages; for example, for
standard quorum-based protocols, it will suffice to gossip blocks that
have received quorum certificates (QCs) along with those QCs.

\vspace{0.2cm} 
\noindent \textbf{A comment on setup assumptions}. Given an
accountable SMR protocol $\mathcal{P}$ and a process set $\Pi$, our
wrapper will initiate a sequence of executions of $\mathcal{P}$, 
with process sets that are progressively smaller subsets of $\Pi$.  Of
course, a PKI for $\Pi$ suffices to provide a PKI for each subset of
$\Pi$ and a random permutation of $\Pi$ naturally induces a random
permutation of each subset. 
Moreover, the maximum number of executions of $\mathcal{P}$ initiated by the
wrapper will be small and the size of the process set of each is
known ahead of time.
(For example, if $\rho_C=1/3$ and the adversary is $5/9$-bounded,
the wrapper will initiate at most two executions of $\mathcal{P}$; if
$\rho_C=1/3$ and the adversary is $2/3$-bounded, at most three.)
Thus, for setup assumptions such as threshold signatures, one can
simply run each required setup in advance, before executing the
wrapper.

\section{Recovery metrics}  \label{rmsec} 

In this section, we introduce definitions to quantify how well a
protocol recovers from consistency violations.

\vspace{0.2cm} 
\noindent \textbf{Generalizing resilience to take recovery into
  account}. 
Is a protocol vulnerable to one consistency violation
inexorably doomed to an unbounded number of them? Or could a protocol
achieve strictly higher levels of resilience by tolerating (and recovering
from) a bounded number of consistency violations?
The following definitions generalize consistency and liveness
resilience to account for the possibility of recovery from consistency
violations.
\begin{itemize} 
\item \emph{Recoverable consistency resilience}. Consider a function
  $g:\mathbb{N}_{\geq 0} \rightarrow [0,1]$. We say a protocol
  $\mathcal{P}$ has recoverable consistency resilience $g$ if the
  following holds for each $r\in \mathbb{N}_{\geq 0}$: $g(r)$ is the
  largest $\rho$ such that, for all $n$, 
provided the adversary is
  $\rho$-bounded,
executions of $\mathcal{P}$
  have at most $r$ consistency violations.
\item \emph{Recoverable liveness resilience}.  Consider a function
  $g:\mathbb{N}_{\geq 0} \rightarrow [0,1]$. We say a protocol
  $\mathcal{P}$ has recoverable liveness resilience $g$ if the
  following holds for each $r\in \mathbb{N}_{\geq 0}$: $g(r)$ is the
  largest~$\rho$ such that, for all $n$, provided the adversary is
  $\rho$-bounded, liveness holds in all executions with precisely $r$
  consistency violations.\footnote{Prior to the $r$th consistency
    violation, a sufficiently large adversary may still be in a position
 to  cause
    a liveness violation.}
\end{itemize} 

\vspace{0.2cm} Suppose $\mathcal{P}$ has consistency resilience
$\rho_C$ and recoverable consistency resilience $g$. Note that
$g(0)=\rho_C$.  Also, $g$ is nondecreasing (i.e., $g(s)\geq g(r)$ for
$s>r$): if executions of $\mathcal{P}$ have at most~$r$ consistency
violations when the adversary is $\rho$-bounded, then this is also
true of all $s>r$. If $g(r+1)>g(r)$, the protocol effectively has
increased consistency resilience after $r$ consistency violations.

\vspace{0.2cm} 
\noindent \textbf{Recoverable resilience for our wrapper}. Suppose
$\mathcal{P}$ is accountable and has consistency resilience~$\rho_C$
and liveness resilience $\rho_L$ for partial synchrony with
$\rho_C+2\rho_L=1$.
If we
identify some fraction $x$ of the processes in $\Pi$ as faulty and
then run an execution of $\mathcal{P}$ using the remaining processes,
there will be no consistency violation so long as less than a
fraction $x+\rho_C(1-x)$ of the processes in $\Pi$ are faulty. Given
this, let us define a sequence $\{ x_r \}_{r\in \mathbb{N}_{\geq 0}}$
by recursion:
\[ x_0=0, \hspace{0.3cm} x_{r+1}=x_r+\rho_C(1-x_r). \] 
\noindent Define: 
\begin{equation}\label{eq:g}
 g_1(r)=\text{min} \{ x_{r+1}, 1-\rho_L \}, \hspace{0.3cm}
 g_2(r)=\text{min} \{ x_r+\rho_L(1-x_r),1-\rho_L \}. 
\end{equation}
\noindent Given $\mathcal{P}$ as input, our wrapper produces an SMR
protocol with recoverable consistency resilience~$g_1$ and recoverable
liveness resilience $g_2$ as in~\eqref{eq:g}. For example, if 
$\rho_C=\rho_L=\tfrac{1}{3}$, then $g_1(0) = g_2(0) = \tfrac{1}{3}$, $g_1(1) =
g_2(1) = \tfrac{5}{9}$,  and $g_1(r) = g_2(r) = \tfrac{2}{3}$ for
all~$r \ge 2$.

\vspace{0.2cm} 
\noindent \textbf{Specifying the recovery time}. Next, we provide a
definition that captures the time required by a protocol to recover from
consistency violations. Suppose $\mathcal{P}$ has recoverable liveness
resilience $g$. We say $\mathcal{P}$ has \emph{recovery time $d$ with
  liveness parameter $\ell$} if the following holds for all executions
$\mathcal{E}$ of~$\mathcal{P}$:

\begin{enumerate} 
\item[$(\dagger_{d,\ell})$] If there exists $r$ such that $\mathcal{E}$ has precisely  $r$ consistency violations, let $t$ be least such that $M_c(t)$ has $r$ consistency violations (otherwise set $t=\infty$). If correct $p_i$ receives the transaction $\mathtt{tr}$ at any timeslot $t'$ then, for every correct $p_j$ and for  $t''=\text{max}\{ t+d,\text{GST},t' \} +\ell$,  $\mathtt{tr}\in \mathtt{log}_{j}(t'')$.
\end{enumerate} 
\noindent In the above, $d$ should be thought of as a `grace period'
after consistency violations, after which liveness with parameter
$\ell$ must hold. In our construction, $d$ is governed by the length
of time it takes to run our recovery procedure.

\vspace{0.2cm} 
\noindent \textbf{Probabilistic recovery time}. Our recovery procedure
uses the random permutation $\Pi^*$---chosen after the adversary
chooses which processes to corrupt---to select ``leaders,'' and as such
it will run for a random duration.
To analyze this, we allow the grace period parameter~$d$ in
the definition above to depend on an error probability
$\varepsilon \in [0,1]$ and sometimes write~$d_{\varepsilon}$ to
emphasize this dependence.
We then make the following definitions:
\begin{itemize} 
\item We say that $(\dagger_{d,\ell})$ \emph{is ensured with
    probability at least $p$} if, for every choice of corrupted
  processes (consistent with a static $\rho$-bounded adversary), with
  probability at least~$p$ over the choice of~$\Pi^*$ (sampled from
  the uniform distribution), $(\dagger_{d,\ell})$ holds in every
  execution consistent with these choices and with the setting.


\item We say that $\mathcal{P}$ has \emph{probabilisitic recovery time $d_{\varepsilon}$ with liveness parameter $\ell$} if it holds for every $\varepsilon \in [0,1]$ that $(\dagger_{d_{\varepsilon},\ell})$ is ensured with probability at least $1-\varepsilon$. 
\end{itemize} 


\vspace{0.2cm} 
\noindent \textbf{Recovery time for our wrapper}. Given $\mathcal{P}$
with liveness parameter $\ell$ as input, our wrapper will produce an
SMR protocol with (worst-case) recovery time $O(\Delta^* \cdot f_a)$,
probabilisitic recovery time
$O(\Delta^* \cdot \log \tfrac{1}{\varepsilon})$, and liveness
parameter $\ell$, where $f_a$ denotes the actual (undetermined) number of
faulty processes.

\vspace{0.2cm}
\noindent \textbf{Bounding rollback}. We say that a protocol has \emph{rollback bounded by $h$} if the following holds for every execution consistent with the setting and  every correct $p_i,p_j\in \Pi$: if there exists an interval $I=[t, t+h]$  such that $\sigma \preceq \mathtt{log}_i(t')$ for all $t'\in I$, then $\sigma \preceq \mathtt{log}_j(t')$ for all sufficiently large $t'$. That is, consistency violations can ``unfinalize'' only transactions
that have been finalized recently, within the previous $h$
time steps. Here, $h$ can be any value that depends only on determined inputs.

\vspace{0.2cm} 
\noindent \textbf{Bounding rollback for our wrapper}. Given an SMR
protocol $\mathcal{P}$ with liveness resilience $\rho_L$ as input, our
wrapper will produce an SMR protocol with rollback bounded by
$h=2\Delta^*$ so long as synchrony holds for $\Delta^*$ and the
adversary is $(1-\rho_L)$-bounded. In fact, 
while the recovery procedure described in Section
\ref{formalspec} requires a common choice for $\Delta^*$, rollback
can be bounded on an individual basis, with each correct process
making their own personal choice of message delay bound
$\leq \Delta^*$. rollback will be bounded by twice their personal
choice of bound, so long as that bound on message delay
holds.\footnote{The requirement that the choice be
  $\leq \Delta^*$ stems from the fact that the recovery procedure
  requires delays to be bounded by $\Delta^*$ to function correctly.}

\section{The intuition behind the wrapper} \label{intuition} 

We describe a wrapper, which takes an accountable and optimally
resilient SMR protocol $\mathcal{P}$ as input, and which runs an
execution of $\mathcal{P}$ until a consistency violation
occurs.\footnote{By ``optimally resilient,'' we mean that the
  protocol's consistency resilience $\rho_C$ and liveness
  resilience~$\lambda_L$ in partial synchrony are both positive and
  satisfy $\rho_C+2\rho_L=1$ (as is the case for all of the ``usual''
  SMR protocols designed for the partially synchronous setting). This
  assumption is merely to simplify the presentation. For a
  non-optimally resilient protocol, the ``$1-\rho_L$'' term
  in~\eqref{eq:g} should be replaced by ``$(1+\rho_C)/2$''.}  Once
this happens, the wrapper triggers a `recovery procedure', which
achieves consensus on a set of faulty processes $F$ for which a proof
of guilt exists, together with a long initial segment of the log
produced by $\mathcal{P}$ below which no consistency violation has
occurred. The wrapper then initiates another execution of
$\mathcal{P}$ that takes this log as its genesis log, with the players
in $F$ removed from the process set. This next execution is run until
another consistency violation occurs, and so on.

\vspace{0.2cm} 
\noindent \textbf{Specifying $\mathtt{log}_i(t)$ and
  $\mathcal{F}$}. While the formal definition of SMR in Section
\ref{setup} requires us to specify the finalization rule $\mathcal{F}$
(from which the transactions $\mathtt{log}_i(t)$ finalized by~$p_i$
can then be defined as $\mathcal{F}(M_i(t))$), it will be more natural
when defining our wrapper to specify $\mathtt{log}_i(t)$ directly, and
then later to define $\mathcal{F}$ such that
$\mathtt{log}_i(t)=\mathcal{F}(M_i(t))$. Recall that the given
protocol $\mathcal{P}$ satisfies consistency and liveness with respect
to a function that may depend on the process set and the genesis
log. We write $\mathcal{F}(\Pi,\mathtt{log}_G)$ to denote this
function.

\vspace{0.2cm} 
\noindent \textbf{The structure of this section}.  In Section \ref{ebr}, we describe the intuition behind a feature of the wrapper which allows us to ensure rollback bounded by $2\Delta^*$.  We stress that \emph{bounding rollback is non-trivial}:  this is the requirement on the recovery procedure that requires the most delicate analysis.  Section \ref{rpi} then describes the intuition behind the recovery procedure.

In what follows, we use the variable $\mathcal{E}$ to denote an
execution of the wrapper (with process set $\Pi$ and $\mathtt{log}_G$
as the genesis log), which initiates successive executions
$\mathcal{E}_1, \mathcal{E}_2,\dots$ of $\mathcal{P}$, where
$\mathcal{E}_r$ has process set $\Pi_r$ and $\mathtt{log}_{G_r}$ as
the genesis log.  Process $p_i$ maintains local variables
$\mathtt{M}_i$ and $\mathtt{M}_{i,r}$ for each $r\geq 1$.\footnote{We
  use $\mathtt{M}_i$ when specifying the pseudocode, rather than
  $M_i(t)$, since $p_i$ only has access to its local clock and does
  not know the `global' value of $t$.}  The former records all
messages so far received in execution $\mathcal{E}$, while the latter
records all messages so far received in execution $\mathcal{E}_r$. We
suppose messages have tags identifying the execution in which they are
sent, and that $\mathtt{M}_{i,r} \subseteq \mathtt{M}_i$ at every
timeslot, for all correct $p_i$ and all $r$.
 
\subsection{Ensuring bounded rollback} \label{ebr} 
In what follows,
we write $\rho_C$ and $\rho_L$ to denote the consistency and liveness
resilience of $\mathcal{P}$.  Each process $p_i$ executing the wrapper
maintains a value $\mathtt{log}_{i}$. Suppose the currently running
execution of $\mathcal{P}$ is $\mathcal{E}_r$. To ensure rollback
bounded by $2\Delta^*$, $p_i$ proceeds as follows:
\begin{itemize} 
\item While running the execution $\mathcal{E}_r$ of $\mathcal{P}$, and when  $p_i$ finds that some subset of $\mathtt{M}_{i,r}$ is an $\mathcal{F}(\Pi_r,\mathtt{log}_{G_r})$-certificate for $\sigma$ properly extending $\mathtt{log}_{i}$, it will set $\mathtt{log}_{i}$ to extend $\sigma$.
\item  Process $p_i$ will only \emph{strongly finalize} $\sigma$, however, once $\mathtt{log}_{i}$ has extended $\sigma$ for an interval of length $2\Delta^*$.
\item  Upon finding that $ \mathtt{M}_{i,r}$ has a consistency violation w.r.t.\ $\mathcal{F}(\Pi_r,\mathtt{log}_{G_r})$, $p_i$ will: 
\begin{itemize} 
\item Send a signed $r$-\emph{genesis} message $(\text{gen},\mathtt{log}_{i},r)$ to all processes  (motivation below); 
 \item Temporarily set  $\mathtt{log}_{i}$ to be $\mathtt{log}_{G_r}$;  
 \item Stop running $\mathcal{E}_r$,  and;
 \item Begin the recovery procedure. 
\end{itemize} 
\end{itemize}


\vspace{0.2cm} To see what this achieves (modulo complications that
may later be introduced by the recovery procedure), suppose that
synchrony holds for $\Delta^*$. Then, due to our assumptions on
message gossiping described in Section \ref{setup}, $2\Delta^*$ bounds
the round-trip time between any two correct processes.  In particular,
suppose that $p_i$ finalizes $\sigma$ at $t$ because there exists
$M\subseteq \mathtt{M}_{i,r}$ which is an
$\mathcal{F}(\Pi_r,\mathtt{log}_{G_r})$-certificate for $\sigma$. Then
every correct process $p_j$ will receive the messages in $M$ by
$t+\Delta^*$, and will then finalize $\sigma$ (never to subsequently
finalize anything incompatible with $\sigma$), unless
$\mathtt{M}_{j,r}$ has a consistency violation (w.r.t.\
$\mathcal{F}(\Pi_r,\mathtt{log}_{G_r})$) by that time.  In the latter
case, $p_i$ will begin the recovery procedure by timeslot
$t+2\Delta^*$ and will not strongly finalize $\sigma$.

\vspace{0.2cm} 
\noindent \textbf{Complications introduced by the recovery
  procedure}. Our recovery procedure introduces the complication that
there is not necessarily consensus on which logs have been strongly
finalized by some correct process. If a single correct process has
strongly finalized $\sigma$ when the recovery procedure is triggered,
and if the procedure determines that a log
$\sigma'\not \succeq \sigma$ should be used as the genesis log in
the next execution of $\mathcal{P}$, then this may violate the
condition that the protocol has rollback bounded by $2\Delta^*$. We
must therefore ensure that the recovery procedure reaches consensus on
a log that extends all logs strongly finalized by correct
processes. As in explained in Section \ref{rpi}, the $r$-genesis
messages sent by processes before entering the recovery procedure will
be used to achieve this.


\subsection{The intuition behind the recovery procedure} \label{rpi}
Recall that $\rho_C$ ($\rho_L$) is the consistency (liveness)
resilience of $\mathcal{P}$ in partial synchrony (with
parameter~$\Delta$), and that the wrapper aims to deliver extra
functionality in the case that synchrony happens to hold with respect
to the (possibly large) bound $\Delta^*$, and so long as the adversary
is $(1-\rho_L)$-bounded. So, suppose these conditions hold.
 
\vspace{0.2cm} As noted in Section \ref{ebr}, while running execution
$\mathcal{E}_r$ of $\mathcal{P}$, process $p_i$ will enter the
recovery procedure upon finding that $ \mathtt{M}_{i,r}$ has a
consistency violation. Given our gossip assumption, described in
Section \ref{setup}, this means that correct processes will begin the
recovery procedure within time $\Delta^*$ of each other.  The key
observation behind the recovery procedure is that, if one has a proof
of guilt for processes in some set $F$, where $|F|\geq \rho_Cn$, then
the fact that the adversary is $(1-\rho_L)$-bounded (and
$2\rho_L+\rho_C=1$) means that the adversary controls less than half
the processes in $\Pi-F$. This follows since:
\[ 
1-\rho_L-\rho_C=\rho_L,\text{ and so }(1-\rho_L-\rho_C)/(1-\rho_C)=
\rho_L/2\rho_L.
\]  
\noindent As a consequence, we can run a modified version of a
standard ($\tfrac{1}{2}$-resilient) SMR protocol for synchrony (our protocol is most similar to  \cite{abraham2020sync}), in
which the instructions are divided into views, each with a distinct
leader. In each view, the leader makes a proposal for the set of
processes $F$ that should be removed from $\Pi_r$ to form $\Pi_{r+1}$,
and \emph{the processes outside $F$ then vote on that proposal}.

\vspace{0.2cm}
\noindent \textbf{Ensuring an appropriate value for
  $\mathtt{log}_{G_{r+1}}$}.  As well as proposing $F$, the leader
$p_i$ must also suggest a sequence $\sigma$ to be used as
$\mathtt{log}_{G_{r+1}}$ and this sequence must extend all logs
strongly finalized by correct processes. To achieve this (while
keeping the probabilistic recovery time small), we run a short
sub-procedure at the beginning of the recovery procedure, before
leaders start proposing values.  We proceed as follows:
\begin{itemize} 
\item Each correct $p_j$ waits time $2\Delta^*$ upon beginning the
  recovery procedure and then sets $P_{j}(r)$ to be the set of
  processes in $\Pi_r$ from which it has received an $r$-genesis
  message.
\item Process $p_j$ then enters view $(r,1)$ (the $1^{\text{st}}$ view
  of the $r^{\text{th}}$ execution of the recovery procedure).
\end{itemize}

\noindent To form an appropriate proposal $\sigma$ for
$\mathtt{log}_{G_{r+1}}$ while in view $(r,v)$, the leader $p_i$ of
the view waits for $2\Delta^*$ after entering the view (to accommodate
possible lags between the progress of and information received by
different correct processes), and then
proceeds as follows.  If $M$ is the set of $r$-genesis messages that
$p_i$ has received by that time and which are signed by processes in
$\Pi_r-F$, then let $M'$ be a maximal subset of $M$ that contains at
most one message signed by each process. We say $\sigma$ is extended
by the $r$-genesis message $(\text{gen},\sigma',r)$ if
$\sigma \preceq \sigma'$. Process $p_i$ then sets $\sigma$ so that
the following condition is satisfied:
  
\vspace{0.1cm}
\noindent $\dagger(M',\sigma)$: $\sigma$ is the longest sequence
extended by more than $\frac{1}{2}|\Pi_r-F|$ elements of $M'$.

\vspace{0.1cm} 
\noindent Process $p_i$ then sends $M'$ along with $\sigma$ as a
\emph{justification} for its proposal. A correct process $p_j$ will be
prepared to vote on the proposal if $\dagger(M',\sigma)$ is satisfied
and $M'$ includes messages from every member of $P_{j}(r)$.

 \vspace{0.2cm} To see that this achieves the desired outcome, note that if $p_j$ is correct and $M'$ includes messages from every member of $P_{j}(r)$, then it must contain a message from every correct process. If $\sigma'$ has been strongly finalized by some correct process, then every correct process must have finalized $\sigma'$ before entering the recovery procedure, and cannot have subsequently finalized any value incompatible with $\sigma'$. So, for each $r$-genesis message $(\text{gen},\sigma'',r)$ sent by a correct process, $\sigma''$ must extend $\sigma'$. It therefore holds that $\sigma' $ is extended by more than $\frac{1}{2}|\Pi_r-F|$ elements of $M'$, so that, if $\dagger(M',\sigma)$ is satisfied, $\sigma$ must extend $\sigma'$.    
 

\section{The formal specification of the wrapper} \label{formalspec}

In what follows, we suppose that, when a correct process sends a
message to `all processes', it regards that message as immediately
received by itself.  The pseudocode uses a number of inputs, local
variables, functions and procedures, detailed below.

\vspace{0.1cm} 
\noindent \textbf{Inputs}. The wrapper takes as input an SMR protocol
$\mathcal{P}$, a process set $\Pi$, a random permutation $\Pi^*$ of
$\Pi$, a value $\mathtt{log}_G$, and message delay bounds $\Delta^*$
and $\Delta$. The consistency resilience $\rho_C$ of $\mathcal{P}$ is
also given as input. Recall that the given protocol $\mathcal{P}$
satisfies consistency and liveness with respect to a finalization
function that may depend on the process set $\Pi'$ and the value
$\mathtt{log}_G'$ for the genesis log.  (For example, signatures from
a certain fraction of the processes in~$\Pi'$ may be required for
transaction finalization.)  We write
$\mathcal{F}(\Pi',\mathtt{log}_G')$ to denote this function, and
suppose also that this function is known to the protocol.
 
 \vspace{0.1cm} 
 \noindent \textbf{Permutations and the variables $\Pi_r$}. Process $p_i$ maintains a variable $\Pi_r$ for each $r\in \mathbb{N}_{\geq 1}$. $\Pi_1$ is initially set to $\Pi$, while each $\Pi_r$ for $r>1$ is initially undefined.\footnote{We write $x\uparrow$ to indicate that the variable $x$ is undefined, and $x\downarrow$ to indicate that $x$ is defined.} Once $\Pi_r$ is defined, $\Pi_r^*$ is the permutation of $\Pi_r$ induced by $\Pi^*$.

\vspace{0.1cm} 
\noindent \textbf{Views and leaders}. Views are indexed by ordered
pairs and ordered lexicographically: one should think of view $(r,v)$
as the $v^{\text{th}}$ view in the $r^{\text{th}}$ execution of the
recovery procedure. For $r,v\in \mathbb{N}_{\geq 1}$, we set
$\mathtt{lead}(r,v)=p_i$, where $p_i=\Pi_r^*(v)$; this function is
used to specify the leader of each view.\footnote{We can write
  $p_i=\Pi_r^*(v)$ because the number of views in the $r^{\text{th}}$
  execution of the recovery procedure will be bounded by~$|\Pi_r|$.}
 
\vspace{0.1cm} 
\noindent \textbf{Received messages and executions}. We let
$\mathtt{M}_i$ be a local variable that specifies the set of all
messages so far received by $p_i$.  The wrapper will also initiate
executions $\mathcal{E}_1,\mathcal{E}_2,\dots$ of $\mathcal{P}$: for
each $r\geq 1$, $\mathtt{M}_{i,r}$ specifies all messages so far
received by $p_i$ in execution $\mathcal{E}_r$.  We suppose that
messages have tags identifying the execution in which they are sent,
and that all messages received by $p_i$ in $\mathcal{E}_r$ are also
received by $p_i$ in the present execution of the wrapper, so that
$\mathtt{M}_{i,r} \subseteq \mathtt{M}_{i}$ for all $r$.

  
\vspace{0.2cm} 
\vspace{0.1cm} 
\noindent \textbf{The variables $\mathtt{log}_{G_{r}}$}.  Process
$p_i$ maintains a variable $\mathtt{log}_{G_{r}}$ for each
$r\in \mathbb{N}_{\geq 1}$. Initially, $\mathtt{log}_{G_{1}}$ is set
to $\mathtt{log}_G$, while each $\mathtt{log}_{G_{r}}$ for $r>1$ is
undefined. If the execution $\mathcal{E}_r$ of $\mathcal{P}$ is
initiated by the wrapper, then this will be an execution with
$\mathtt{log}_{G_r}$ as the genesis log and with process set $\Pi_r$.
  
   \vspace{0.1cm} 
  \noindent \textbf{Logs}. Process $p_i$ maintains two variables $\mathtt{log}_{i}$ and $\mathtt{log}^*_{i}$. The former should be thought of as the sequence of transactions that $p_i$ has finalized, while the latter is the sequence that $p_i$ has strongly finalized. 
  
      \vspace{0.1cm} 
 \noindent \textbf{Signatures}. We write $m_{p_i}$ to denote the message $m$ signed by $p_i$. 
  
   \vspace{0.1cm} 
  \noindent \textbf{$r$-\emph{genesis} messages}. An $r$-\emph{genesis} message is a message of the form $(\text{gen},\sigma,r)_{p_j}$, where $\sigma$ is a sequence of transactions and $p_j\in \Pi$. These are used during the $r^{\text{th}}$ execution of the recovery procedure to help reach consensus on an appropriate value for $\mathtt{log}_{G_{r+1}}$. We say $\sigma'$ is extended by the $r$-genesis message $(\text{gen},\sigma,r)_{p_j}$ if $\sigma' \preceq \sigma$.

   \vspace{0.1cm} 
 \noindent \textbf{$r$-proposals}. An $r$-\emph{proposal} is a tuple  $P=(F,\sigma,M,r)$, where $F\subset \Pi$, $\sigma$ is a sequence of transactions, $M$ is a set of $r$-genesis messages,  and $r\in \mathbb{N}_{\geq 1}$. The last entry  $r$ indicates that this is a proposal corresponding to the $r^{\text{th}}$ execution of the recovery procedure.  One should think of $F$ as a suggestion for $\Pi_r-\Pi_{r+1}$, while $\sigma$ is a suggestion for $\mathtt{log}_{G_{r+1}}$ and $M$ is a \emph{justification} for $\sigma$.

     \vspace{0.1cm} 
 \noindent \textbf{$(r,v)$-proposals}. An $(r,v)$-proposal is a message of the form $R=(P,v, Q)_{p_j}$, where $P$ is an $r$-proposal, $p_j\in \Pi$, and either $Q=\bot$ or else $Q$ is a QC (as specified below) for some $(r,v')$-proposal with $v'<v$. 
 
\vspace{0.1cm} 
\noindent \textbf{Votes}. A \emph{vote} for the $(r,v)$-proposal
$R=(P,v,Q)_{p_j}$, where $P=(F,\sigma,M,r)$, is a message of the form
$V=R_{p_k}$, where $p_k\in \Pi$. We also say $V$ is a \emph{vote by
  $p_k$}. At timeslot $t$, $p_i$ will regard $V$ as \emph{valid} if it
is
signed by one of the processes that, from~$p_i$'s perspective, remains
in the active process set---i.e., if~$\Pi_r$ is defined and $p_k\in \Pi_r-F$.
  
      \vspace{0.1cm} 
 \noindent \textbf{QCs}. A  QC for an $(r,v)$-proposal $R=(P,v,Q')_{p_j}$, where $P=(F,\sigma,M,r)$, is a set $Q$ of  votes for $R$. At timeslot $t$, $p_i$ will regard  $Q$ as valid if every vote in $Q$ is valid and  $Q$ contains more than $\frac{1}{2}| \Pi_r-F|$ votes, each by a different process. If $Q$ is a QC for an $(r,v)$-proposal $R=(P,v,Q')_{p_j}$, we set $\mathtt{view}(Q)=(r,v)$ and $\mathtt{P}(Q)=P$, and we may also just refer to $Q$ as a QC. 
 
     \vspace{0.1cm} 
 \noindent \textbf{Locks}. Each process $p_i$ maintains a value $Q_{i}^+$, which is initially undefined. This  variable should be thought of as playing the same role as locks in Tendermint.   The variable $Q_{i}^+$ may be set to a valid QC for an $(r,v)$-proposal during view $(r,v)$. 
 
      \vspace{0.1cm} 
 \noindent \textbf{The variables $P_{i}(r)$ and $t_0$}. Process $p_i$ maintains a local variable $P_{i}(r)$ for each $r\geq 1$, initially undefined. Upon halting execution $\mathcal{E}_r$ and entering the recovery procedure at timeslot $t$ (according to its local clock), $p_i$ will set $t_0:=t$, wait $2\Delta^*$, and then set $P_{i}(r)$ to be the set of processes in $\Pi_r$ from which it has received signed $r$-genesis messages.

       \vspace{0.1cm} 
 \noindent \textbf{The time for each view}. Each view is of length $8\Delta^*$. Having set $t_0$ upon halting execution $\mathcal{E}_r$ and  entering the recovery procedure, $p_i$ will start view $(r,v)$ (for $v\geq 1$) at time $t_0+2\Delta^*+8(v-1)\Delta^*$. 
 
 \vspace{0.2cm} 
 \noindent \textbf{Detecting equivocation}. At timeslot $t$, we say $p_i$ \emph{detects equivocation in view $(r,v)$} if $\mathtt{M}_{i}$ contains at least two distinct  $(r,v)$-proposals signed by $\mathtt{lead}(r,v)$.\footnote{If  $\mathtt{M}_{i}$ contains a vote for an $(r,v)$-proposal, we consider it as also containing that $(r,v)$-proposal.} 

\vspace{0.1cm} 
\noindent \textbf{Valid $(r,v)$-proposals}.  Consider an $(r,v)$-proposal $R=(P,v,Q)_{p_j}$, where $P=(F,\sigma,M,r)$. At timeslot $t$ (according to $p_i$'s local clock), process $p_i$ will regard  $R$ as valid if: 
 \begin{enumerate} 
 \item[(i)] $\Pi_r$ and $\mathtt{log}_{G_r}$ are defined; 
  \item[(ii)]  $F\subset \Pi_r$, and $|F|\geq \rho_C|\Pi_r|$; 
 \item[(iii)] $\mathtt{M}_{i}$ is a $(\mathcal{P},\Pi_r,\mathtt{log}_{G_r})$-proof of guilt for every process in $F$; 
 \item[(iv)]  $M$ is a set of $r$-genesis messages, each signed by a different process in $\Pi_r-F$; 
 \item[(v)] For each $p_k\in P_{i}(r)$, there exists an $r$-genesis message signed by $p_k$ in $M$; 
 \item[(vi)]  $\sigma$ is the longest sequence  extended by more than $\frac{1}{2}|\Pi_r-F|$ elements of $M$;
 \item[(vii)] $p_j=\mathtt{lead}(r,v)$;
 \item[(viii)] $Q_{i}^+$ is undefined, or $Q$ is a valid QC with (a) $\mathtt{view}(Q)\geq \mathtt{view}(Q_{i}^+)$, and (b) $\mathtt{P}(Q)=P$, and;
 \item[(ix)] $p_i$ does not detect equivocation in view $(r,v)$.
 \end{enumerate} 
 
\vspace{0.1cm} 
\noindent \textbf{The local variables $\mathtt{voted}$ and $\mathtt{lockset}$}. For each $(r,v)$, $\mathtt{voted}(r,v)$ and $\mathtt{lockset}(r,v)$ are initially set to 0. These values are used to indicate whether $p_i$ has yet voted or set its lock during view $(r,v)$. 

\vspace{0.1cm} 
\noindent \textbf{$r$-finish votes and QCs}. An \emph{$r$-finish vote for $P$} is a message of the form $P_{p_j}$, where $P=(F,\sigma,M,r)$ is an $r$-proposal and $p_j\in \Pi$. At timeslot $t$, $p_i$ will regard the $r$-finish vote as valid if $\Pi_r$ is defined and $p_j\in \Pi_r-F$. A \emph{valid finish-QC} for $P$ is a set of more than  $\frac{1}{2}| \Pi_r-F|$ valid $r$-finish votes for $P$, each signed by a different process.  

\vspace{0.1cm} 
\noindent \textbf{The procedure $\mathtt{Makeproposal}$}. If $p_i=\mathtt{lead}(r,v)$, then it will run this procedure during view $(r,v)$. To carry out the procedure, $p_i$ checks to see whether there exists some greatest $v'<v$ such that it has received a valid QC, $Q$ say,  with $\mathtt{view}(Q)=(r,v')$. If so, then $p_i$ sends the $(r,v)$-proposal $R=(\mathtt{P}(Q),v,Q)_{p_i}$ to all processes. If not, then  it sets $F$ to be  the set of all processes $p_j\in \Pi_r$ such that $\mathtt{M}_{i}$ is a $(\mathcal{P},\Pi_r,\mathtt{log}_{G_r})$-proof of guilt for $p_j$. 
Let  $M$ be the set of $r$-genesis messages that $p_i$ has received and which are signed by processes in $\Pi_r-F$, and let  $M'$ be a maximal subset of $M$ that contains at most one message signed by each process.  Process $p_i$ then sets $\sigma$ to be the longest sequence  extended by more than $\frac{1}{2}|\Pi_r-F|$ elements of $M'$ and sends to all processes the 
 $(r,v)$-proposal $R=(P,v,\bot)_{p_i}$, where $P=(F,\sigma,M',r)$.  

\vspace{0.1cm} 
\noindent \textbf{Message gossiping}.  We adopt the  message gossiping  conventions described in Section \ref{setup}. 

\vspace{0.1cm} 
\noindent \textbf{The function $\mathcal{F}$}. While the function $\mathcal{F}$ is not explicitly used in the pseudocode, we will show in Section \ref{pmt} that, at every  $t$, $\mathtt{log}_i=\mathcal{F}(\mathtt{M}_i)$ (where $\mathtt{log}_i$ and $\mathtt{M}_i$ are as locally defined for $p_i$ at $t$). The function $\mathcal{F}$ is specified in Algorithm 2.

 \begin{algorithm} \label{alg1}
\caption{: the instructions for $p_i$}
\begin{algorithmic}[1]


    \State \textbf{Local variables} 
    
   \State  $\mathtt{r}$, initially 1.    \Comment Number of executions of $\mathcal{P}$ initiated
   
   \State $\mathtt{rec}$, initially 0.   \Comment 1 if carrying out recovery
   
   \State $\mathtt{log}_{i}$, $\mathtt{log}^*_{i}$, initially set to $\mathtt{log}_G$ \Comment Finalized and strongly finalized transactions
   
   
   \State $\Pi_r$, $r\geq 1$. Initially, $\Pi_1=\Pi$, while $\Pi_r\uparrow $ for $r>1$. \Comment Process set for $\mathcal{E}_r$
   
   \State $\mathtt{log}_{G_r}$. Initially, $\mathtt{log}_{G_1}=\mathtt{log}_G$, while $\mathtt{log}_{G_r}\uparrow$ for $r>1$. \Comment Genesis log for $\mathcal{E}_r$ 

   \State $\mathtt{M}_i$, $\mathtt{M}_{i,r}$, initially empty.    \Comment As specified in Section \ref{formalspec}
   
   \State $Q_{i}^+$, initially undefined. \Comment The lock 
   
   \State $t_0$, initially undefined. \Comment Timeslot at start of recovery 
   
   \State $P_{i}(r)$, initially undefined.    \Comment A set of processes
   
   \State $\mathtt{voted}(r,v)$, $\mathtt{lockset}(r,v)$ ($r,v\geq 1$),  initially 0.  \Comment As specified in Section \ref{formalspec}
   
   \State

    \State \textbf{At timeslot $t$:}

    \State  \hspace{0.3cm}  \textbf{If} $t=0$, start execution $\mathcal{E}_1$ of $\mathcal{P}$, with process set $\Pi_1$ and with $\mathtt{log}_{G_1}$ as genesis log; \label{start} 
    
        
        \State
        
            \State  \hspace{0.3cm}  \textbf{If} $\mathtt{rec}=0$:  \label{rec0a} 
        
            \State  \hspace{0.6cm} \textbf{If} $\mathtt{M}_{i,\mathtt{r}}$ has a consistency violation w.r.t.\ $\mathcal{F}(\Pi_\mathtt{r},\mathtt{log}_{G_\mathtt{r}})$:
            
            \State \hspace{0.9cm} Send $(\text{gen},\mathtt{log}_{i},\mathtt{r})_{p_i}$ to all processes;   \Comment Send $\mathtt{r}$-genesis message
            
             \State \hspace{0.9cm} Set $\mathtt{log}_{p_i}:=\mathtt{log}_{G_{\mathtt{r}}} $;  Stop running $\mathcal{E}_{\mathtt{r}}$; Set $\mathtt{rec}:=1$;   \label{rec0aend}     \Comment Start recovery
             
%

\State
              
                  \State  \hspace{0.3cm}  \textbf{If} $\mathtt{rec}=0$:  \label{rec0b} 
              
               \State  \hspace{0.6cm}  \textbf{If}  $\exists \sigma,M$ s.t.\ $\sigma \succ \mathtt{log}_{i}$ and $ M\subseteq \mathtt{M}_{i,\mathtt{r}}$  is an  $\mathcal{F}(\Pi_\mathtt{r},\mathtt{log}_{G_\mathtt{r}})$-certificate for $\sigma$; 
              
                   \State  \hspace{0.9cm}  Let $\sigma$ be the  longest such; Set $\mathtt{log}_{i}:=\sigma$; \Comment Extend log 
              
       \State  \hspace{0.6cm}  \textbf{If} there exists a longest $\sigma \succ \mathtt{log}_{i}^*$ s.t. $\mathtt{log}_{i}$ has extended $\sigma$ for time $2\Delta^*$:
       
         \State  \hspace{0.9cm}  Set $\mathtt{log}^*_{i}:=\sigma$;   \label{rec0bend}  \Comment Extend strongly finalized log 
         
         \State 
         
                \State  \hspace{0.3cm}  \textbf{If} $\mathtt{rec}=1$:

                  \State  \hspace{0.6cm} \textbf{If} $t_0\uparrow$, set $t_0:=t$;  \label{recinit} \Comment Set $t_0$ upon entering recovery  
                  
                  \State  \hspace{0.6cm} \textbf{If} $t=t_0+2\Delta^*$:   \Comment Set $P_{i}(\mathtt{r})$
                  
                  \State  \hspace{0.9cm} Set $P_{i}(\mathtt{r}):= \{ p_j\in \Pi_{\mathtt{r}}:\ \mathtt{M}_i$ contains an $\mathtt{r}$-genesis message signed by $p_j \}$; \label{recinitend} 
                  
%
%
%

                  \State 
                  
                   \State  \hspace{0.6cm} \textbf{If} $t=t_0+4\Delta^*+8(v-1)\Delta^*$ (for some $v\in \mathbb{N}_{\geq 1}$) \textbf{and} $p_i=\mathtt{lead}(\mathtt{r},v)$:  \label{viewv} 
                   
                    \State  \hspace{0.9cm} $\mathtt{Makeproposal}$;    \Comment Leader makes new proposal $2\Delta^*$ after starting view

                        \State  \hspace{0.6cm} \textbf{If} $t\in [t_0+2\Delta^*+8(v-1)\Delta^*, t_0+2\Delta^* +8v\Delta^*)$ (for some $v\in \mathbb{N}_{\geq 1}$):
                        
                           \State  \hspace{0.9cm} \textbf{If} $\mathtt{voted}(\mathtt{r},v)=0$ \textbf{and} $\mathtt{M}_i$ contains a valid $(\mathtt{r},v)$-proposal $R$: 
                           
                            \State  \hspace{1.2cm} Send $R_{p_i}$ to all processes; Set $\mathtt{voted}(\mathtt{r},v):=1$;  \Comment Vote 
                            
                               \State  \hspace{0.9cm} \textbf{If} $\mathtt{lockset}(\mathtt{r},v)=0$ \textbf{and} $\mathtt{M}_i$ contains a valid QC for an $(\mathtt{r},v)$-proposal, $Q$ say:
                             
                               \State  \hspace{1.2cm} Set $Q_{i}^+:=Q$, $\mathtt{lockset}(\mathtt{r},v):=1$;    \Comment Set lock
                               
                               \State  \hspace{1.2cm}  Set the $(\mathtt{r},v)$-timer to expire in time $2\Delta^*$;

                                 \State  \hspace{0.9cm} \textbf{If} $(\mathtt{r},v)$-timer expires and  $p_i$ does not detect equivocation in view $(\mathtt{r},v)$: 
                                                                  
                                  \State  \hspace{1.2cm}  Send $\mathtt{P}(Q_{i}^+)_{p_i}$ to all processes;    \Comment Send finish vote \label{viewvend} 
                                  
                                  \State 
                                  
                                   \State  \hspace{0.6cm} \textbf{If} $\mathtt{M}_i$ contains a valid finish-QC for some $P=(F,\sigma,M,\mathtt{r})$: \label{finish1} 
                                   
                                     \State  \hspace{0.9cm}  Set $\Pi_{\mathtt{r}+1}:=\Pi_{\mathtt{r}}-F$, $\mathtt{log}_{G_{\mathtt{r}+1}}:=\sigma$;  \Comment Start new execution of $\mathcal{P}$
                                   
                                   \State  \hspace{0.9cm}  Set $\mathtt{r}:=\mathtt{r}+1$ and make $t_0$ and $Q_i^+$ undefined; 
                                   
                                      \State \hspace{0.9cm} Start execution $\mathcal{E}_{\mathtt{r}}$ of $\mathcal{P}$, with process set $\Pi_{\mathtt{r}}$ and with $\mathtt{log}_{G_{\mathtt{r}}}$ as genesis log;

                                     \State \hspace{0.9cm} Set $\mathtt{rec}:=0$; Set $\mathtt{log}_i:=\mathtt{log}_{G_{\mathtt{r}}}$; \label{finish2}

\end{algorithmic}
\end{algorithm}

\vspace{0.2cm} 
\noindent \textbf{Pseudocode walk-through}.   The pseudocode appears in Algorithm 1. Below, we summarise the function of each section of code.

\vspace{0.1cm} 
\noindent \textbf{Line \ref{start}}. This line starts the execution of the wrapper by initiating $\mathcal{E}_1$, the first execution of $\mathcal{P}$, which has process set $\Pi_1=\Pi$ and $\mathtt{log}_{G_1}=\mathtt{log}_G$ as the genesis log. 

\vspace{0.1cm} 
\noindent \textbf{Lines \ref{rec0a} - \ref{rec0aend}}. During the $r^{\text{th}}$ execution of $\mathcal{P}$, these lines check whether the recovery  procedure should be triggered. If so, then $p_i$ disseminates an $r$-genesis message, temporarily resets its log, and starts the recovery procedure. 

\vspace{0.1cm} 
\noindent \textbf{Lines \ref{rec0b} - \ref{rec0bend}}.  During the $r^{\text{th}}$ execution of $\mathcal{P}$, these lines check whether $p_i$ should extend its finalized and strongly finalized logs.

\vspace{0.1cm} 
\noindent \textbf{Lines \ref{recinit} - \ref{recinitend}}. These lines initialize the $r^{\text{th}}$ execution of the recovery procedure by setting $t_0$ and $P_{i}(r)$. 

\vspace{0.1cm} 
\noindent \textbf{Lines \ref{viewv} - \ref{viewvend}}. These lines specify the instructions for view $(r,v)$. Initially, the leader waits $2\Delta^*$ and then makes an $(r,v)$-proposal. 
Processes vote upon receiving a first valid $(r,v)$-proposal. Upon receiving a first valid QC  for an $(r,v)$-proposal, $Q$ say,  $p_i$ sets its lock to $Q$ and then waits $2\Delta^*$. If, at this time, it still does not detect equivocation in view $(r,v)$, then it sends a finish vote for $\mathtt{P}(Q)$. 

\vspace{0.1cm} 
\noindent \textbf{Lines \ref{finish1} - \ref{finish2}}. These lines determine when $p_i$ stops carrying out the $r^{\text{th}}$ execution of the recovery procedure. This happens when $p_i$  receives a valid 
finish-QC for some $r$-proposal $P$. The $r$-proposal $P$ then specifies $\Pi_{r+1}$ and $\mathtt{log}_{G_{r+1}}$.

 \begin{algorithm} \label{alg2}
\caption{: the function $\mathcal{F}$}
\begin{algorithmic}[1]


    \State \textbf{Inputs} 
    
    \State $M$ \Comment A set of messages 
    
    \State $\Pi$, $\mathtt{log}_G$ \Comment Process set  and genesis log

    \State $\mathcal{F}(\Pi',\mathtt{log}_G')$ \Comment A function for each possible $\Pi'$ and $\mathtt{log}_G'$
    
       \State \textbf{Local variables} 
    
   \State  $\mathtt{r}$, initially 1. 
   
   \State $\Pi_r$, $r\geq 1$. Initially, $\Pi_1=\Pi$, while $\Pi_r\uparrow $ for $r>1$. 
   
   \State $\mathtt{log}_{G_r}$. Initially, $\mathtt{log}_{G_1}=\mathtt{log}_G$, while $\mathtt{log}_{G_r}\uparrow$ for $r>1$. 
   
   \State $\mathtt{end}$, initially 0
   
   \State 
   
   \State \textbf{While} $\mathtt{end}=0$ \textbf{do}: 
   
   \State \hspace{0.3cm}  \textbf{If} $M$ does not have a consistency violation w.r.t.\ $\mathcal{F}(\Pi_\mathtt{r},\mathtt{log}_{G_\mathtt{r}})$:
   
             \State  \hspace{0.6cm}  Let $\sigma$ be longest such that $M$ is an  $\mathcal{F}(\Pi_\mathtt{r},\mathtt{log}_{G_\mathtt{r}})$-certificate for $\sigma$; 
              
                   \State  \hspace{0.6cm}  \textbf{Return} $\sigma$; Set $\mathtt{end}:=1$; 

   \State \hspace{0.3cm}  \textbf{Else if} there does not exist a unique $\mathtt{r}$-proposal with a valid finish-QC in $M$:  \label{sr} 
   
              \State  \hspace{0.6cm}  \textbf{Return} $\mathtt{log}_{G_{\mathtt{r}}}$; Set $\mathtt{end}:=1$; 
              
         \State \hspace{0.3cm}  \textbf{Else if} there exists a unique $\mathtt{r}$-proposal $P=(F,\sigma,M',\mathtt{r})$ with a valid finish-QC in $M$:   
         
           \State  \hspace{0.6cm}  Set $\Pi_{\mathtt{r}+1}:=\Pi_{\mathtt{r}}-F$, $\mathtt{log}_{G_{\mathtt{r}+1}}= \sigma$; 
           
               \State  \hspace{0.6cm} Set $\mathtt{r}:=\mathtt{r}+1$;

\end{algorithmic}
\end{algorithm}

\vspace{0.2cm} 
\noindent \textbf{Informal discussion: how does the recovery procedure
  ensure consensus?}  To establish that at most one $r$-proposal can
receive a valid finish-QC, suppose that some correct $p_i$ sends a
finish vote for the $r$-proposal $P$ during view $(r,v)$. In this
case, $p_i$ must set its lock to some valid QC, $Q$ say, at some
timeslot $t$ while in view $v$. Suppose that $Q$ is a QC for the
$(r,v)$-proposal $R$, and note that $\mathtt{P}(Q)=P$. We will observe
that:
\begin{enumerate} 
\item All correct processes set their locks to some valid QC for $R$ while in view $v$.
\item No $(r,v)$-proposal other than $R$ can receive a QC that is regarded as valid by any correct process. 
\end{enumerate} 
\noindent From (1) and (2) it will be easy to argue by induction on
$v'>v$ that no correct process votes for any proposal
$R'=(P',v',Q')_{p_j}$ such that $P'\neq P$, since their locks will
forever prevent voting for such proposals. It follows that if any
correct $p_k$ sends a finish vote for an $r$-proposal $P'$ during some
view $v'\geq v$, then $P=P'$. We conclude that, assuming~(1) and~(2),
at most one $r$-proposal can receive a valid finish-QC.

\vspace{0.2cm} 
To see that (1) holds, note that all correct processes will be in view $(r,v)$ at $t+\Delta^*$ and will have received $Q$ by this time. They will therefore set their lock to be some QC for $R$, unless they have already received a valid QC for some $(r,v)$-proposal $R'\neq R$. The latter case is not possible, since then $p_i$ would detect equivocation in view $(r,v)$ by $t+2\Delta^*$, and so would not send the finish vote for $P$. 

\vspace{0.2cm} 
To see that (2) holds, the argument is similar. All correct processes will be in view $(r,v)$ at $t+\Delta^*$ and will have received $R$ by this time. Item (ix) in the validity conditions for $(r,v)$-proposals prevents correct processes from voting for $(r,v)$-proposals  $R'\neq R$ at later timeslots, and correct processes cannot vote for such proposals at any timeslot $\leq t+\Delta^*$ because $p_i$ would detect equivocation in view $(r,v)$ in this case. 

\vspace{0.2cm} Having established that at most one $r$-proposal can
receive a valid finish-QC, suppose now, towards a contradiction, that
no $r$-proposal ever receives a valid finish-QC. Let $v$ be the least
such that $\mathtt{lead}(r,v)$ is correct and let $i$ be such that
$p_i=\mathtt{lead}(r,v)$. Since $p_i$ waits $2\Delta^*$, until some
timeslot $t$ say, before disseminating an $(r,v)$-proposal
$R=(P,v,Q)_{p_i}$, it will have seen all locks held by correct
processes by this time, and will have received $r$-genesis messages
from all processes in any set $P_{j}(r)$ for correct $p_j$. At $t$,
$p_i$ will disseminate an $(r,v)$-proposal which all correct processes
regard as valid by timeslot $t+\Delta^*$. All correct processes will
therefore vote for the proposal by this time and will receive a valid
QC for the proposal by time $t+2\Delta^*$. All correct processes will
then set their locks. They will still be in view $(r,v)$ by time
$t+4\Delta^*$ (since they enter the view within time $\Delta^*$ of
each other) and will send $r$-finish votes for $P$ by this time.

\section{The theorem statements} 

Given functions $g,g':\mathbb{N}\rightarrow \mathbb{R}$, we say
$g\leq g'$ if $g(r)\leq g'(r)$ for all $r\in \mathbb{N}$. If
$x\in \mathbb{R}$, we say $g\leq x$ if $g(r)\leq x$ for all
$r\in \mathbb{N}$. We say $g< g'$ if $g\leq g'$ and $g(r)<g'(r)$ for
some $n$.

We begin with our main positive result, which states the guarantees
our wrapper achieves for recoverable consistency and liveness,
worst-case and probabilistic recovery time, and rollback.

\begin{theorem} \label{mt} Suppose the wrapper is given an accountable SMR protocol $\mathcal{P}$ as input, where $\mathcal{P}$ has consistency resilience $\rho_C$ and liveness resilience $\rho_L$  in partial synchrony, such that $\rho_C+2\rho_L=1$. Let $g_1$ and $g_2$  be as
  defined in expression~\eqref{eq:g} in Section \ref{rmsec}.  If $\mathcal{P}$ has liveness parameter $\ell$ and is accountable for $(1-\rho_L)$-bounded adversaries, then the wrapper produces a protocol with the same consistency and liveness resilience as $\mathcal{P}$ in partial synchrony, and with the following properties for $(1-\rho_L)$-bounded adversaries when message delays are bounded by $\Delta^*$: 
\begin{itemize} 
\item[(i)] Recoverable consistency resilience $\geq g_1$ and recoverable liveness resilience $\geq g_2$. 
\item[(ii)]  Recovery time $O(f_a\Delta^*)$ with liveness parameter $\ell$, where $f_a$ is the actual (unknown) number of faulty processes. 
\item[(iii)]   \label{rt} Probabilistic recovery time $O(\Delta^* \text{log}\frac{1}{\varepsilon} )$ with liveness parameter $\ell$. 
\item[(iv)] Rollback bounded by $2\Delta^*$.
\end{itemize} 
\end{theorem}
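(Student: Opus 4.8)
I would organize the argument into four stages.

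\emph{Stage 1: the wrapper is a legitimate SMR protocol that inherits $\mathcal{P}$'s guarantees.} The first task is to show $\mathtt{log}_i(t)=\mathcal{F}(M_i(t))$ at every timeslot, for the $\mathcal{F}$ of Algorithm 2, and that the wrapper has the same consistency resilience $\rho_C$, liveness resilience $\rho_L$, and liveness parameter $\ell$ as $\mathcal{P}$ in partial synchrony. The equality is an induction on the execution index: while $\mathcal{E}_r$ runs and $\mathtt{M}_{i,r}$ has no consistency violation w.r.t.\ $\mathcal{F}(\Pi_r,\mathtt{log}_{G_r})$, $\mathtt{log}_i$ is the longest $\mathcal{F}(\Pi_r,\mathtt{log}_{G_r})$-certified sequence in $\mathtt{M}_{i,r}$ (the first branch of Algorithm 2); once a violation appears, $p_i$ resets $\mathtt{log}_i$ to $\mathtt{log}_{G_r}$ and enters recovery (the second/third branch); and $p_i$ passes to $\mathcal{E}_{r+1}$ exactly when it holds a valid finish-QC, which by the uniqueness lemma of Stage 2 determines a single $(\Pi_{r+1},\mathtt{log}_{G_{r+1}})$, keeping Algorithms 1 and 2 in lockstep. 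Resilience is then immediate: a $\rho_C$-bounded adversary never causes a consistency violation in $\mathcal{P}$, so the recovery procedure is never triggered and the wrapper coincides with $\mathcal{P}$; and if a violation does occur (possible for liveness attacks when $\rho_L>\rho_C$), the recovery procedure terminates once message delays drop to $\le\Delta\le\Delta^*$, i.e.\ after GST, after which $\mathcal{E}_{r+1}$ is live with parameter $\ell$.

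\emph{Stage 2: the key structural lemma.} Under the theorem's hypotheses I would prove, for each $r$: (a) at most one $r$-proposal ever receives a valid finish-QC; (b) if the $r$th recovery terminates, the resulting $\mathtt{log}_{G_{r+1}}$ extends every sequence strongly finalized by any correct process while running $\mathcal{E}_r$. Part (a) is the lock/equivocation argument already sketched in Section~\ref{formalspec}: all correct processes lock on the same QC in the first view in which a correct finish vote is cast, an induction on views shows no correct process ever votes for a conflicting $r$-proposal, and a finish-QC requires a majority of $\Pi_r-F$ (fewer than half of whom are faulty, since $\rho_C+2\rho_L=1$). Part (b) rests on two observations: a correct process's $r$-genesis message always extends its current $\mathtt{log}_i$, hence everything it has strongly finalized; and after the $2\Delta^*$ wait each $P_j(r)$ contains every correct process, so any proposal a correct process votes for carries $r$-genesis messages from all correct processes, which by validity conditions (v)–(vi) forces the proposed $\sigma$ to extend every strongly-finalized sequence. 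The delicate ingredient linking (b) to strong finalization is the timing claim: if correct $p_k$ keeps $\sigma$ in $\mathtt{log}_k$ for a full $2\Delta^*$ window with $\mathtt{rec}=0$, then no correct process enters recovery for $\mathcal{E}_r$ before receiving $p_k$'s certificate for $\sigma$ — otherwise it would gossip the violating messages to $p_k$ within $\Delta^*$, driving $p_k$ into recovery inside its own window.

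\emph{Stage 3: recovery time (parts (ii), (iii)).} A view with a correct leader terminates the recovery: the leader waits $2\Delta^*$, has by then seen all correct locks and all relevant $r$-genesis messages, and its proposal collects a finish-QC within a further $4\Delta^*$ since correct processes enter the view within $\Delta^*$ of one another. The leaders of the $r$th recovery are a uniformly random ordering of $\Pi_r$, and — since only genuinely faulty processes are ever removed — the faulty fraction of $\Pi_r$ is at most $f_a/n<1-\rho_L$. Hence the first correct leader appears within $f_a+1$ views, giving worst-case recovery time $O(f_a\Delta^*)$; and within $O(\log\tfrac1\varepsilon)$ views except with probability $\varepsilon$, since $(1-\rho_L)^{\Theta(\log(1/\varepsilon))}\le\varepsilon$, giving probabilistic recovery time $O(\Delta^*\log\tfrac1\varepsilon)$. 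The only subtlety is that $\Pi_r$ may itself depend on $\Pi^*$, handled by the assumption that $\Pi^*$ is sampled after the adversary fixes its corruptions. After a recovery, $\mathcal{E}_{r+1}$ runs on $\Pi_{r+1}$, whose faulty count is $f_a$ minus the genuinely faulty processes already removed, which (by the recursion in Stage 4) is below $\rho_L|\Pi_{r+1}|$ in the regime of the recoverable liveness resilience, so $\mathcal{E}_{r+1}$ is live with parameter $\ell$; this is exactly $(\dagger_{d,\ell})$ with the claimed $d$.

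\emph{Stage 4: recoverable resilience (i) and bounded rollback (iv).} Each recovery removes a set $F$ of genuinely faulty processes with $|F|\ge\rho_C|\Pi_r|$, so (using that $y\mapsto y+\rho_C(1-y)$ is increasing) the fraction of $\Pi$ removed after $r$ recoveries is at least $x_r$; since $\mathcal{E}_{r+1}$ has no consistency violation once its faulty fraction falls below $\rho_C$, and each execution contributes at most one violation to $M_c$ w.r.t.\ the wrapper's $\mathcal{F}$, a $g_1(r)$-bounded adversary forces at most $r$ violations, and the same computation with $\rho_L$ replacing $\rho_C$ yields recoverable liveness resilience $\ge g_2$ as in \eqref{eq:g}. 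Part (iv) is the heart of the matter: call $\sigma$ \emph{$r$-established} if $\sigma\preceq\mathtt{log}_{G_r}$ or $\sigma$ was strongly finalized by some correct process while in $\mathcal{E}_r$, and prove by induction on $r$ that an $r$-established $\sigma$ is $(r+1)$-established (via Stage 2(b)) and that every correct process ever in $\mathcal{E}_r$ eventually and permanently has $\mathtt{log}_j\succeq\sigma$ there (via within-$\Delta^*$ gossip of the relevant certificate). If correct $p_i$ has $\sigma\preceq\mathtt{log}_i(t')$ throughout an interval of length $2\Delta^*$, then either $\mathtt{rec}=1$ somewhere in it — so $\mathtt{log}_i$ is a genesis log extending $\sigma$ and $\sigma$ is $r'$-established for that $r'$ — or $p_i$ stays in one $\mathcal{E}_r$ with $\mathtt{rec}=0$ throughout, whence the strong-finalization rule sets $\mathtt{log}_i^*\succeq\sigma$ and $\sigma$ is $r$-established. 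Either way $\mathtt{log}_{G_{r'}}\succeq\sigma$ for all later $r'$, and since every correct process eventually either settles permanently in some execution (its log then grows to and stays $\succeq\sigma$) or keeps advancing through executions (each new genesis log extending $\sigma$), $\sigma\preceq\mathtt{log}_j(t')$ for all sufficiently large $t'$. I expect this last stage to be the main obstacle: reconciling the within-execution certificate-gossip argument with the cross-execution genesis-log mechanism, and pinning down the timing so that strong finalization of $\sigma$ by one correct process provably enters every other correct process's log or $r$-genesis message before the recovery procedure could unfinalize it; everything else is bookkeeping on top of the lock-based uniqueness lemma and the $x_{r+1}=x_r+\rho_C(1-x_r)$ recursion.
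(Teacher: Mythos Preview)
Your proposal is correct and follows essentially the same approach as the paper's proof in Section~\ref{pmt}: your Stage~2(a) is Lemma~\ref{ml}(ii), Stage~2(b) together with the rollback half of Stage~4 is Lemma~\ref{cons}, the claim ``each execution contributes at most one violation to $M_c$ w.r.t.\ the wrapper's $\mathcal{F}$'' is Lemma~\ref{vbound}, and Stages~1 and~3 match the remaining lemmas. Your ``$r$-established'' framing is a mild repackaging of the paper's more direct argument in Lemma~\ref{cons}, but the underlying ideas (the $2\Delta^*$ timing argument forcing every correct $r$-genesis message to extend any strongly finalized $\sigma$, and the chain $\mathtt{log}_{G_{r+1}}\succeq\mathtt{log}_{G_r}$) are identical.
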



\vspace{0.2cm} 
The proof of Theorem \ref{mt} is given in Section \ref{pmt}. 

The next three results describe senses in which Theorem \ref{mt} is
tight.
We say a protocol has \emph{bounded rollback} if there exists some $h$ such that  the protocol has rollback bounded by $h$.  
Our first impossibility result states that the rollback of a protocol
must scale with~$\Delta^*$, and hence bounded rollback in the
partially synchronous setting is impossible.


\begin{theorem}[Impossibility result 1] \label{ir1} 
Suppose partial synchrony holds w.r.t.\ $\Delta$ and synchrony holds w.r.t.\ $\Delta^*$.  
Suppose $\mathcal{P}$ is a protocol for SMR with liveness resilience $\rho_L$, consistency resilience $\rho_C\geq \rho_L$,  liveness parameter $\ell$, and with rollback bounded by $h$.  If we are given only that the adversary is $\rho$-bounded for $\rho>1-2\rho_L$, then $h=\Omega(\Delta^*)$. In particular, $\mathcal{P}$ does not have bounded rollback in the pure partially synchronous setting. 
\end{theorem}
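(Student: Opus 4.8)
\medskip
\noindent\textbf{Proof plan.} I would run a ``split-brain'' indistinguishability argument of the type that proves $\rho_C+2\rho_L\le 1$, but track timing so that the two halves of the split remain mutually isolated for a full $\Delta^*$ timesteps, using the slack the synchrony bound permits before GST. Fix distinct valid transactions $\mathtt{tr}_1\neq\mathtt{tr}_2$ and set $\sigma_k:=\mathtt{log}_G\cdot\mathtt{tr}_k$, so that $\sigma_1$ and $\sigma_2$ are incompatible. Since $\rho>1-2\rho_L$, i.e.\ $1-\rho<2\rho_L$, for all large $n$ one can partition $\Pi$ into nonempty sets $P_0,P_1,P_2$ with $|P_0|,|P_2|<\rho_L n$ and $|P_0|+|P_2|>(1-\rho)n$, so that $|P_1|=n-|P_0|-|P_2|<\rho n$. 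The argument uses three executions. In $E_0$: $\text{GST}=0$; the processes of $P_2$ are Byzantine and silent; those of $P_0\cup P_1$ are correct; $\mathtt{tr}_1$ is the only transaction ever submitted, delivered to a fixed $p_0\in P_0$ at time $0$. Since $|P_2|<\rho_L n$, liveness gives $\mathtt{tr}_1\in\mathtt{log}_{p_0}(\ell)$; since $\mathtt{tr}_1$ is the only transaction and $|P_2|<\rho_L n\le\rho_C n$ (here I use $\rho_C\ge\rho_L$), consistency makes $\mathtt{log}_{p_0}(\cdot)$ prefix-monotone, so $\sigma_1\preceq\mathtt{log}_{p_0}(t)$ for all $t\ge\ell$. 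Execution $E_1$ is the mirror image, with $P_0$ Byzantine and silent, $P_1\cup P_2$ correct, and $\mathtt{tr}_2$ the only transaction, delivered to a fixed $q_0\in P_2$ at time $0$; here $\sigma_2\preceq\mathtt{log}_{q_0}(t)$ for all $t\ge\ell$.

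The third execution $E_2$ is the attack: $\text{GST}=\Delta^*$; the processes of $P_1$ are Byzantine (admissible, as $|P_1|<\rho n$); those of $P_0\cup P_2$ are correct; $\mathtt{tr}_1$ is delivered to $p_0$ and $\mathtt{tr}_2$ to $q_0$, both at time $0$, and no other transactions are submitted. The adversary delays every message between $P_0$ and $P_2$ so that it arrives at time $\ge\Delta^*$; this is permitted, since every such message is sent at some time $t<\Delta^*=\text{GST}$ and may therefore be delivered at time $\Delta^*\le\min\{t+\Delta^*,\ \Delta^*+\Delta\}$, respecting both the synchrony bound and the post-GST bound. Meanwhile each Byzantine process of $P_1$ runs two ``personalities'': towards $P_0$ it simulates, and relays the outgoing messages of, the corresponding correct process of $P_1$ in $E_0$; towards $P_2$ it simulates the corresponding correct process of $P_1$ in $E_1$. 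Since $P_0$ receives nothing from $P_2$ (and vice versa) before time $\Delta^*$, an induction on time shows that over $[0,\Delta^*)$ the joint behaviour of $P_0$ together with the ``$E_0$-personalities'' of $P_1$ in $E_2$ coincides with that of $P_0\cup P_1$ in $E_0$; in particular $p_0$'s view over $[0,\Delta^*)$ in $E_2$ equals its view in $E_0$, so $\sigma_1\preceq\mathtt{log}_{p_0}(t)$ for all $t\in[\ell,\Delta^*)$, and symmetrically $\sigma_2\preceq\mathtt{log}_{q_0}(t)$ for all $t\in[\ell,\Delta^*)$.

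To conclude, suppose towards a contradiction that $h<\Delta^*-\ell$, and work in $E_2$. The interval $I=[\ell,\ell+h]$ lies inside $[\ell,\Delta^*)$ and $\sigma_1\preceq\mathtt{log}_{p_0}(t')$ for all $t'\in I$; since rollback is bounded by $h$ and $p_0,q_0$ are correct, $\sigma_1\preceq\mathtt{log}_{q_0}(t')$ for all sufficiently large $t'$. Likewise $\sigma_2\preceq\mathtt{log}_{q_0}(t')$ for all $t'\in I$, so taking $p_i=p_j=q_0$ in the definition gives $\sigma_2\preceq\mathtt{log}_{q_0}(t')$ for all sufficiently large $t'$. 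Then $\mathtt{log}_{q_0}(t')$ would eventually extend both of the incompatible sequences $\sigma_1$ and $\sigma_2$ --- a contradiction. Hence $h\ge\Delta^*-\ell=\Omega(\Delta^*)$, since $\ell$ does not depend on $\Delta^*$. For the final claim of the theorem, in the pure partially synchronous setting there is no bound $\Delta^*$, so in $E_2$ one may instead set $\text{GST}=T$ and delay all messages between $P_0$ and $P_2$ until time $T$, for an arbitrary $T$; the same argument then yields $h\ge T-\ell$ for every $T$, so no finite $h$ can bound the rollback.

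The step I expect to be the main obstacle is the construction and verification of $E_2$: one must argue that $p_0$'s view over the \emph{entire} interval $[0,\Delta^*)$ is literally identical to its view in $E_0$ (and similarly for $q_0$ in $E_1$), which is what requires the Byzantine set $P_1$ to maintain two mutually consistent simulations. This, in turn, is what forces the size constraints $|P_0|,|P_2|<\rho_L n$ (needed for liveness and, via $\rho_C\ge\rho_L$, consistency in $E_0,E_1$) together with $|P_1|<\rho n$ (needed so the $E_2$ adversary is admissible), and these can be met at once precisely when $\rho>1-2\rho_L$.
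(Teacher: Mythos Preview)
Your proposal is correct and follows essentially the same split-brain indistinguishability argument as the paper's proof: the same three-way partition (your $P_0,P_1,P_2$ correspond to the paper's $\Pi_2,\Pi_1,\Pi_3$), the same pair of ``clean'' executions with one silent side, the same combined execution with GST $=\Delta^*$ in which the middle set is Byzantine and runs two personalities while cross-traffic between the honest sides is delayed to $\Delta^*$, and the same conclusion $h\ge\Delta^*-\ell$. If anything, you are slightly more explicit than the paper in spelling out how the bounded-rollback definition is applied at the end and in separately handling the pure partially synchronous case.
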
 
\begin{proof} 
The proof is an easy adaptation of the classic proof of Dwork, Lynch and Stockmeyer \cite{DLS88}. 
Let $\mathcal{F}$ be the function with respect to which  $\mathcal{P}$ has liveness resilience $\rho_L$ and consistency resilience $\rho_C$. 
Given any $\gamma\in [0,1)$, let $\Delta^*$ be large enough that $\ell/\Delta^*<1-\gamma$. We show that $\mathcal{P}$ does not have rollback bounded by $\gamma\Delta^*$.  Let $n$ be such that there exists $m\in \mathbb{N}$ with $1-2\rho_L<m/n<\rho$,  and such that $n-m$ is even. Let $\Pi=\{ p_1,\dots,p_n \}$ and let $\Pi_1,\Pi_2,\Pi_3$ be disjoint subsets of $\Pi$ with $|\Pi_1|=m$, $|\Pi_2|=|\Pi_3|=(n-m)/2$. Note that $(n-m)/2<\rho_Ln$. Let $\mathtt{tr}_1$ and $\mathtt{tr}_2$  be distinct transactions. We consider three executions of $\mathcal{P}$ with $\Delta=1$. 

\vspace{0.2cm} 
\noindent \textbf{Execution} $\mathcal{E}_1$. Processes in $\Pi_1$ and $\Pi_2$ are correct. Processes in $\Pi_3$ are faulty and perform no actions. Processes in $\Pi_1$ receive $\mathtt{tr}_1$ at timeslot 1. GST$=0$. 

\vspace{0.1cm} 
\noindent \textbf{Execution} $\mathcal{E}_2$. Processes in $\Pi_1$ and $\Pi_3$ are correct. Processes in $\Pi_2$ are faulty and perform no actions. Processes in $\Pi_1$ receive $\mathtt{tr}_2$ at timeslot 1. GST$=0$. 

\vspace{0.1cm} 
\noindent \textbf{Execution} $\mathcal{E}_3$. GST$=\Delta^*$. Processes in $\Pi_2$ and $\Pi_3$ are correct. Processes in $\Pi_1$ are faulty. Processes in $\Pi_1$ receive $\mathtt{tr}_1$ and $\mathtt{tr}_2$ at timeslot 1. The processes in $\Pi_1$ simulate two simultaneous executions $\mathcal{E}_3'$ and $\mathcal{E}_3''$ prior to GST, sending messages as instructed in these two executions, unless explicitly stated otherwise. In  $\mathcal{E}_3'$, each process in $\Pi_1$ acts exactly as if correct,  except that it (i)  ignores receipt of $\mathtt{tr}_2$, (ii)  ignores messages from, and does not send messages to, processes in $\Pi_3$, and (iii) ignores messages sent by processes in $\Pi_1$ corresponding to $\mathcal{E}_3''$. 
In  $\mathcal{E}_3''$, each process in $\Pi_1$ acts exactly as if correct,  except that it (i)  ignores receipt of $\mathtt{tr}_1$, (ii)  ignores messages from, and does not send messages to, processes in $\Pi_2$, and (iii) ignores messages sent by processes in $\Pi_1$ corresponding to $\mathcal{E}_3'$. 
 After GST, processes in $\Pi_1$ carry out no action.   
Message delays prior to GST are as follows: 
\begin{itemize} 
\item Messages sent by processes in $\Pi_2$ to processes in $\Pi_3$  do not arrive until GST.
\item Symmetrically, messages sent by processes in $\Pi_3$ to processes in $\Pi_2$ do not arrive until GST. 
\item For $k\in \{ 2,3 \}$, messages sent from processes in $\Pi_k$ to processes in $\Pi_1$ or $\Pi_k$  arrive at the next timeslot. 
\item Messages sent by processes in $\Pi_1$ to processes in $\Pi_1$, $\Pi_2$ or $\Pi_3$  arrive at the next timeslot. 
\end{itemize} 
Since $|\Pi_3|<\rho_L n$,  all processes in  $\Pi_2$ must finalize $\mathtt{tr}_1$ by timeslot $\ell$ in $\mathcal{E}_1$. Similarly,  all processes in  $\Pi_3$ must finalize $\mathtt{tr}_2$ by  timeslot $\ell$ in $\mathcal{E}_2$.  
Let $i$ and $j$ be such that $p_i\in \Pi_2$ and $p_j\in \Pi_3$. From  consistency (and since $\rho_C\geq \rho_L$) it follows that: 
\begin{enumerate} 
\item[(i)]  $\mathcal{F}(M_i(t))=\mathtt{tr}_1$ for all $t\in [\ell,\text{GST})$ in $\mathcal{E}_1$. 
\item[(ii)]  $\mathcal{F}(M_j(t))=\mathtt{tr}_2$ for all $t\in [\ell,\text{GST})$ in $\mathcal{E}_2$. 
\end{enumerate} 
For processes in $\Pi_2$,  $\mathcal{E}_3$ is indistinguishable from $\mathcal{E}_1$ prior to GST. For processes in $\Pi_3$,  $\mathcal{E}_3$ is indistinguishable from $\mathcal{E}_2$ prior to GST.  It follows that $\mathcal{F}(M_i(t))=\mathtt{tr}_1$ for all $t\in [\ell,\text{GST})$ in $\mathcal{E}_3$ and that  $\mathcal{F}(M_j(t))=\mathtt{tr}_2$ for all $t\in [\ell,\text{GST})$ in $\mathcal{E}_3$. However, it cannot be the case that $\mathcal{F}(M_k(t))$ begins with both $\mathtt{tr}_1$ and $\mathtt{tr}_2$ for all correct $p_k$ and all sufficiently large $t$. This suffices to establish the claim, since $\Delta^*-\ell>\gamma\Delta^*$. 
\end{proof}

Our second impossibility result justifies our restriction to
$(1-\rho_L)$-bounded adversaries: with a larger adversary, bounded
rollback is impossible (even in the synchronous setting).

\begin{theorem}[Impossibility result 2] \label{ir3} 
Consider the synchronous setting and suppose $\mathcal{P}$ is a protocol for SMR with liveness resilience $\rho_L$ and consistency resilience $\rho_C\geq \rho_L$.  If we are given only that the adversary is $\rho$-bounded for $\rho>1-\rho_L$, then $\mathcal{P}$ does not have bounded rollback. (The same result also holds in partial synchrony.) 
\end{theorem}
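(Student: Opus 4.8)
The plan is to adapt the classical split-world/indistinguishability argument of Dwork--Lynch--Stockmeyer, in the same spirit as the proof of Theorem~\ref{ir1}, but now exploiting the larger adversary to make the deception \emph{permanent} rather than merely lasting $\Theta(\Delta^*)$ time. Fix any target bound $h$; I want an execution violating rollback bounded by $h$. Choose $n$ (suitably large, with $n-m$ even) and an integer $m$ with $(1-\rho_L)n < m < \rho n$, which is possible since $\rho > 1-\rho_L$. Two consequences of this choice drive everything: first, $n-m < \rho_L n \le \rho_C n$, so an adversary corrupting $n-m$ processes is both $\rho_L$-bounded and $\rho_C$-bounded; second, $m > (1-\rho_L)n \ge \rho_C n \ge \rho_L n$ (using $\rho_C+\rho_L\le 1$ in the synchronous setting and $\rho_C\ge\rho_L$), so an adversary corrupting $m$ processes is neither $\rho_C$-bounded nor $\rho_L$-bounded. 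Partition $\Pi = \Pi_1\sqcup\Pi_2\sqcup\Pi_3$ with $|\Pi_1|=m$ and $|\Pi_2|=|\Pi_3|=(n-m)/2$, and fix distinct transactions $\mathtt{tr}_1\neq\mathtt{tr}_2$.

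Next I would set up three executions (all with $\Delta=1$). In $\mathcal{E}_1$, the processes in $\Pi_1\cup\Pi_2$ are correct and receive $\mathtt{tr}_1$ at time $1$, while $\Pi_3$ is faulty; since $|\Pi_3|<\rho_L n$, liveness forces every correct process, in particular those in $\Pi_2$, to finalize $\mathtt{tr}_1$ within $\ell$ steps, and since $|\Pi_3|<\rho_C n$, consistency then keeps $\mathtt{log}_G\cdot\mathtt{tr}_1\preceq \mathtt{log}_i(t)$ for all $t\ge\ell$ and all correct $p_i$. Execution $\mathcal{E}_2$ is the mirror image: $\Pi_1\cup\Pi_3$ correct, $\Pi_2$ faulty, $\mathtt{tr}_2$ delivered at time $1$, and every process in $\Pi_3$ permanently holding $\mathtt{log}_G\cdot\mathtt{tr}_2$. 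Finally $\mathcal{E}_3$ is the pivotal execution: $\Pi_2$ and $\Pi_3$ are correct, $\Pi_1$ is faulty and runs two internal simulations so that, to $\Pi_2$, $\mathcal{E}_3$ is indistinguishable from $\mathcal{E}_1$, and to $\Pi_3$, indistinguishable from $\mathcal{E}_2$; the behaviour of faulty $\Pi_3$ in $\mathcal{E}_1$ (and of faulty $\Pi_2$ in $\mathcal{E}_2$) is defined to echo the messages that its honest counterpart sends in $\mathcal{E}_3$, the apparent circularity being broken by induction on timeslots using the $\ge 1$-step message delay. Because $m\ge\rho_C n$ and $m\ge\rho_L n$, neither consistency nor liveness is guaranteed in $\mathcal{E}_3$, so there is no contradiction in some correct $p_i\in\Pi_2$ holding $\mathtt{log}_G\cdot\mathtt{tr}_1$ throughout $[\ell,\ell+h]$ while some correct $p_j\in\Pi_3$ keeps $\mathtt{log}_G\cdot\mathtt{tr}_2$ (incompatible with $\mathtt{log}_G\cdot\mathtt{tr}_1$) at the front of its log forever; the latter gives $\mathtt{log}_G\cdot\mathtt{tr}_1\not\preceq \mathtt{log}_j(t')$ for all $t'$, contradicting rollback bounded by $h$. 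As $h$ was arbitrary, $\mathcal{P}$ has no bounded rollback. The partially synchronous case is only easier, since one may take GST arbitrarily large and simply delay all $\Pi_2\!\leftrightarrow\!\Pi_3$ traffic.

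The step I expect to be the main obstacle is justifying that $\mathcal{E}_3$ really can keep $\Pi_2$ and $\Pi_3$ in conflicting states \emph{indefinitely} in the synchronous model. Unlike in Theorem~\ref{ir1}, where the partition between the two honest groups lasts only $\Delta^*$ and hence yields merely an $\Omega(\Delta^*)$ rollback, here the messages $\Pi_3$ sends $\Pi_2$ (and vice versa) must be delivered within $\Delta^*$, so one cannot hide the conflict; instead the construction of the three executions must be carried out jointly and inductively, so that whatever $\Pi_3$'s honest messages reveal to $\Pi_2$ in $\mathcal{E}_3$ is matched by $\Pi_3$'s faulty messages in $\mathcal{E}_1$, and one must verify that the honest minority $\Pi_2\cup\Pi_3$ (of size $<\rho_L n$) cannot on its own pull $\mathcal{P}$ out of the divergent state that the large faulty set $\Pi_1$ sustains. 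This is precisely where the threshold $1-\rho_L$ (rather than $1-2\rho_L$) is essential: it is what makes both consistency and liveness inapplicable in $\mathcal{E}_3$ while leaving them in force in $\mathcal{E}_1$ and $\mathcal{E}_2$. I anticipate that making the mutual echoing and the indistinguishability claims watertight may require a more careful packaging of the argument than sketched above (possibly a short chain of executions rather than just three), with the bulk of the work being the bookkeeping of which of $\mathcal{P}$'s guarantees is active in which execution and for how long the relevant logs are forced to agree.
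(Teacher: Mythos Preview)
Your mutual-echoing construction has a genuine gap that the induction on timeslots does not repair. In your $\mathcal{E}_3$, honest $\Pi_3$ receives from faulty $\Pi_1$ the $\mathcal{E}_2$-branch messages, which are $\Pi_1$-signed and pertain to $\mathtt{tr}_2$; any SMR protocol that produces $\mathcal{F}$-certificates will have honest $\Pi_3$ relay some of these $\Pi_1$-signatures onward to $\Pi_2$. For indistinguishability, faulty $\Pi_3$ in $\mathcal{E}_1$ must send the same messages to $\Pi_2$. But in $\mathcal{E}_1$, $\Pi_1$ is honest, sees only $\mathtt{tr}_1$, and never produces those signatures; under ideal signatures faulty $\Pi_3$ cannot forge them. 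So either $\mathcal{E}_1$ is not a valid execution, or $\Pi_2$'s views in $\mathcal{E}_1$ and $\mathcal{E}_3$ differ and the indistinguishability collapses. The same obstruction applies symmetrically to $\mathcal{E}_2$, so you cannot keep \emph{both} honest groups permanently deceived in synchrony; your own identification of this as ``the main obstacle'' is correct, but the proposed fix does not address the signature issue.

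The paper sidesteps this with a different decomposition. It takes $|\Pi_1|=2m-n$ and $|\Pi_2|=|\Pi_3|=n-m$, and in $\mathcal{E}_1,\mathcal{E}_2$ the faulty set is \emph{silent}. In $\mathcal{E}_3$ there is only \emph{one} honest group, $\Pi_3$; the faulty coalition $\Pi_1\cup\Pi_2$ (of size $m<\rho n$) runs the whole of $\mathcal{E}_1$ internally among itself---producing a certificate $M_1$ for $\mathtt{tr}_1$ using only its own signatures---while $\Pi_1$ separately plays honest-$\mathcal{E}_2$ toward $\Pi_3$, which is genuinely indistinguishable from $\mathcal{E}_2$ for $\Pi_3$ up to any chosen time $t_3>\max\{t_1,t_2\}+h$. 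The finishing step is then purely about the function $\mathcal{F}$: with $M_2:=M_j(t_3)$, one of $\mathtt{tr}_1,\mathtt{tr}_2$ fails to be a prefix of $\mathcal{F}(M_1\cup M_2)$; assuming WLOG it is $\mathtt{tr}_2$, a single dump of $M_1$ at time $t_3$ forces $p_j\in\Pi_3$ to roll back $\mathtt{tr}_2$ after having held it for more than $h$ steps. There is no circular echoing, and every signature appearing in $\mathcal{E}_3$ is produced by a party that is faulty there.
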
 
\begin{proof} 
As in the proof of Theorem \ref{ir1}, let $\mathcal{F}$ be the function with respect to which  $\mathcal{P}$ has liveness resilience $\rho_L$ and consistency resilience $\rho_C$. 
 Consider any $h$ which is a function of the determined inputs. Let $n$ be such that there exists $m\in \mathbb{N}$ with $1-\rho_L<m/n<\rho$. Let $\Pi=\{ p_1,\dots,p_n \}$ and let $\Pi_1,\Pi_2,\Pi_3$ be disjoint subsets of $\Pi$ with $|\Pi_1|=2m-n$, $|\Pi_2|=|\Pi_3|=n-m$. Note that $\Pi_1\cup\Pi_2\cup \Pi_3=\Pi$ and $n-m<\rho_Ln$. Let $\mathtt{tr}_1$ and $\mathtt{tr}_2$  be distinct transactions. We consider three executions of $\mathcal{P}$ with $\Delta^*=2$. In all three executions, messages sent at any timeslot $t$ arrive at the least even timeslot $t'>t$, unless explicitly stated otherwise. We first specify executions $\mathcal{E}_1$ and $\mathcal{E}_2$: 

\vspace{0.2cm} 
\noindent \textbf{Execution} $\mathcal{E}_1$. Processes in $\Pi_1$ and $\Pi_2$ are correct. Processes in $\Pi_3$ are faulty and perform no actions. Processes in $\Pi_1$ receive $\mathtt{tr}_1$ at timeslot 1.

\vspace{0.1cm} 
\noindent \textbf{Execution} $\mathcal{E}_2$. Processes in $\Pi_1$ and $\Pi_3$ are correct. Processes in $\Pi_2$ are faulty and perform no actions. Processes in $\Pi_1$ receive $\mathtt{tr}_2$ at timeslot 1.  

\vspace{0.2cm} 
Since $n-m<\rho_L n$,  all processes in  $\Pi_2$ must finalize $\mathtt{tr}_1$ by some timeslot $t_1$ in $\mathcal{E}_1$. Similarly,  all processes in  $\Pi_3$ must finalize $\mathtt{tr}_2$ by some timeslot $t_2$ in $\mathcal{E}_2$. Set $t_3$ to be an even timeslot  greater than $\text{max} \{ t_1,t_2 \} + h$.  
Let $i$ and $j$ be such that $p_i\in \Pi_2$ and $p_j\in \Pi_3$. From  consistency (and since $\rho_C\geq \rho_L$) it follows that: 
\begin{enumerate} 
\item[(i)]  $\mathcal{F}(M_i(t))=\mathtt{tr}_1$ for all $t\in [t_1,t_3]$ in $\mathcal{E}_1$. 
\item[(ii)]  $\mathcal{F}(M_j(t))=\mathtt{tr}_2$ for all $t\in [t_2,t_3]$ in $\mathcal{E}_2$. 
\end{enumerate} 
Set $M_1=M_i(t_3)$ and $M_2=M_j(t_3)$. It cannot be that $\mathcal{F}(M_1 \cup M_2)$ extends both $\mathtt{tr}_1$ and $\mathtt{tr}_2$. Without loss of generality, suppose that it does not extend $\mathtt{tr}_2$. We now specify $\mathcal{E}_3$: 

\vspace{0.1cm} 
\noindent \textbf{Execution} $\mathcal{E}_3$. Processes in $\Pi_3$ are correct. Processes in $\Pi_1$ and $\Pi_2$ are faulty and receive $\mathtt{tr}_1$ and $\mathtt{tr}_2$ at timeslot 1. At timeslots $\leq t_3$, the processes in $\Pi_1\cup \Pi_2$ carry out a simulation of $\mathcal{E}_1$ between them (sending no messages to processes in $\Pi_3$ as part of this simulation). Recall that $p_i\in \Pi_2$. As a result of this simulation, $M_i(t_3)$ (as defined for $\mathcal{E}_3$) contains all messages in $M_1$. At timeslots $\leq t_3$ each process in $\Pi_1$ also simulates execution $\mathcal{E}_2$, i.e. sends to all processes in $\Pi_3$ precisely the same messages at the same timeslots as in $\mathcal{E}_2$. Recall that $p_j\in \Pi_3$. At timeslot $t_3$, $p_i$ sends all messages in $M_1$ to $p_j$, and these messages are received at the next timeslot. 

\vspace{0.2cm} 
Since $\mathcal{E}_3$ is indistinguishable from $\mathcal{E}_2$ at timeslots $\leq t_3$ for $p_j$,  $\mathcal{F}(M_j(t))=\mathtt{tr}_2$ for all $t\in [t_2,t_3]$ in $\mathcal{E}_3$. However, $M_j(t_3+1)=M_1\cup M_2$. Since $\mathcal{F}(M_1\cup M_2)$ does not extend $\mathtt{tr}_2$, $\mathcal{P}$ does not have rollback bounded by $h$. 
\end{proof}

Our final impossibility result shows that the recoverable consistency
and liveness functions~$g_1$ and~$g_2$ in Theorem~\ref{mt} cannot be
improved upon, giving an analog of the ``$\rho_C + 2\rho_L \le 1$''
constraint for all positive values of~$r$.

\begin{theorem}[Impossibility result 3] \label{ir4} 
Given $\rho_C$ and $\rho_L$ such that $\rho_C+2\rho_L=1$,  let $g_1$ and $g_2$ be as defined in Section \ref{rmsec}. Suppose $g_1',g_2'\leq 1-\rho_L$ and that $\mathcal{P}$ is an SMR protocol for partial synchrony with  recoverable consistency resilience $\geq g_1'$ and recoverable liveness resilience $\geq g_2'$ when message delays are bounded by $\Delta^*$. Suppose that, for some $d$ and $\ell$, $\mathcal{P}$ has recovery time $d$ with liveness parameter $\ell$ when the adversary is $1-\rho_L$-bounded. Then: 
\begin{enumerate} 
\item If $g_2'\geq g_2$, then $g_1'\leq g_1$, and; 
\item If $g_2'>g_2$, then $g_1'<g_1$. 
\end{enumerate} 
\end{theorem}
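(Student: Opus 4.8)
The plan is to prove the contrapositive-flavored statements by a simulation/indistinguishability argument in the style of Theorems \ref{ir1} and \ref{ir3}, lifted to handle $r$ consistency violations rather than just one. Fix $\rho_C,\rho_L$ with $\rho_C+2\rho_L=1$ and recall the recursion $x_0=0$, $x_{r+1}=x_r+\rho_C(1-x_r)$, so that $g_1(r)=\min\{x_{r+1},1-\rho_L\}$ and $g_2(r)=\min\{x_r+\rho_L(1-x_r),1-\rho_L\}$. The key structural fact I would isolate first is that $x_{r+1}$ is exactly the fraction you get by ``stacking'' adversaries: if a $\rho$-bounded adversary can already force $r$ violations among the full process set, then after removing the (at least $\rho_C|\Pi|$) processes with proofs of guilt and recursing, an adversary of relative size $\rho_C$ in the residual set forces one more violation; arithmetic gives $x_{r+1}$ as the closure fraction. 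Dually, $x_r+\rho_L(1-x_r)$ is the fraction that, after $r$ violations have shrunk the active set to a $(1-x_r)$-fraction, leaves a $\rho_L$-relative adversary in the residual set — just enough to stall liveness there.

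For part (1), suppose toward a contradiction that $g_2'\ge g_2$ but $g_1'(r)>g_1(r)$ for some $r$; since $g_1',g_1\le 1-\rho_L$, this forces $g_1(r)=x_{r+1}<1-\rho_L$ and $g_1'(r)>x_{r+1}$. I would build an adversary of size strictly between $x_{r+1}$ and $g_1'(r)$ that forces $r+1$ consistency violations, contradicting recoverable consistency resilience $\ge g_1'$. The construction is recursive: partition $\Pi$ into a ``guilt-bearing'' block and a residual block of relative size $1-x_r$ (at each level, the split is a $\rho_C$ vs.\ $1-\rho_C$ split of the current residual set, matching the DLS-style three-group partition into $\Pi_1,\Pi_2,\Pi_3$ of Theorem \ref{ir1}). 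At each of the $r$ recursion levels, replay the Theorem~\ref{ir1}/\ref{ir3} gadget within the current residual set to produce one consistency violation and, crucially, a proof of guilt for the $\rho_C$-fraction block — this is where accountability-free-ness is NOT needed, since we are forcing violations, not detecting them; we just need that a single violation is forced, which the DLS argument already gives. After $r$ levels we have forced $r$ violations and narrowed the active set to relative size $\prod(1-\rho_C)=1-x_r$; a further $\rho_C$-relative adversary there (total absolute size $x_r+\rho_C(1-x_r)=x_{r+1}$, and we have slack up to $g_1'(r)$) forces the $(r+1)$st. The recovery-time hypothesis $d$ with liveness parameter $\ell$ is used exactly as in Theorem~\ref{ir1}: between successive forced violations we let the protocol ``heal'' for time $d$, then inject a fresh transaction and wait $\ell$ past GST so that the protocol commits to a value in the residual set, enabling the next round of the gadget; choosing $\Delta^*$ large relative to $d+\ell$ (as a function of determined inputs other than $\Delta^*$) makes the schedule feasible. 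For part (2), the argument is identical but with one extra wrinkle: if $g_2'>g_2$, then after the $r$ forced violations we additionally have a $(g_2'(r)-g_2(r))>0$ margin of residual-relative adversary power beyond $\rho_L$; combining this with the $\rho_C$-block gives strictly more than $x_{r+1}$, so we can force the $(r+1)$st violation even after ``spending'' some corruptions on a liveness stall — yielding $g_1'(r)<g_1(r)$ strictly (the relevant inequality becomes strict because $2\rho_L+\rho_C=1$ converts a strict liveness surplus into a strict consistency deficit in the residual set).

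The main obstacle I anticipate is bookkeeping the recursive partition and the message-delay schedule so that the indistinguishability arguments compose across the $r$ levels: each level's gadget needs GST to occur ``late enough'' that the residual processes have committed, yet the whole $r$-fold construction must still fit inside a single execution with a single (large but finite) $\Delta^*$ and finite rollback budget $h$ to be contradicted. Concretely, one must verify that the simulating faulty blocks at level $k$ can themselves host the level-$(k{+}1)$ simulation without any correct process distinguishing it from the ``honest'' branch — a nesting of the three-execution trick of Theorem~\ref{ir1}. I would handle this by an explicit induction on $r$, at each step treating the residual set as a fresh instance of the $r{=}0$ case (plain DLS) with genesis log equal to whatever the protocol committed at the previous level, and carefully tracking that $|F|\ge \rho_C|\Pi_r|$-type size constraints are met at every level so the partition sizes stay integral (choosing $n$ divisible by a suitable product, exactly as $n-m$ is taken even in Theorems~\ref{ir1} and~\ref{ir3}). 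The liveness/consistency trade-off $\rho_C\ge\rho_L$ and $\rho_C+2\rho_L=1$ should make all the residual-set arithmetic go through, mirroring the identity $(1-\rho_L-\rho_C)/(1-\rho_C)=\rho_L/2\rho_L$ already exploited in Section~\ref{rpi}.
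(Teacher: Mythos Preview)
Your high-level plan---induction on $r$ with a DLS-style split-brain gadget at the inductive step---is the right one and matches the paper. But your framing in terms of ``residual sets,'' ``proofs of guilt,'' and ``replaying the gadget within the current residual set'' conflates the impossibility argument with the wrapper's positive construction. Theorem~\ref{ir4} is about \emph{arbitrary} SMR protocols: there is no residual set, no accountability, and no process removal to appeal to. You cannot run the gadget ``within'' a subset of $\Pi$; the protocol always runs on all of $\Pi$. What the induction hypothesis actually hands you is a black-box execution $\mathcal{E}_0$ with $r$ violations in which the faulty set $\Pi_1$ has size below $(x_r+\epsilon)n$, and nothing more. The paper then partitions $\Pi\setminus\Pi_1$ into $\Pi_2,\Pi_3,\Pi_4$ and builds three executions that agree with $\mathcal{E}_0$ through time $t_0$. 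In the two auxiliary executions the faulty set is $\Pi_1\cup\Pi_4$ (resp.\ $\Pi_1\cup\Pi_3$), sized just below $g_2'(r)\cdot n$; because each has exactly $r$ violations and the adversary is $g_2'(r)$-bounded, \emph{recoverable liveness at level $r$} (not bare $\rho_L$) plus recovery time $d$ and parameter $\ell$ forces $\Pi_3$ (resp.\ $\Pi_4$) to finalize distinct transactions by $t_0+d+\ell$. This is the precise hook for the hypothesis $g_2'\ge g_2$, and your write-up never pins it down.

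Your account of part~(2) has the mechanism backwards. Writing $g_2'(r)=x_r+\rho^*(1-x_r)$, the Byzantine block in the bad execution is $\Pi_1\cup\Pi_2$ with $|\Pi_2|\approx(1-2\rho^*)(1-x_r)n$, so the adversary size needed is about $x_r+(1-2\rho^*)(1-x_r)$. When $\rho^*=\rho_L$ this equals $x_{r+1}$ (part~1); when $\rho^*>\rho_L$ it is strictly \emph{less} than $x_{r+1}$ (part~2). In other words, a \emph{higher} $g_2'$ lets the silent groups $\Pi_3,\Pi_4$ be larger in the auxiliary executions, which makes the actively-simulating group $\Pi_2$ \emph{smaller}, so a strictly smaller adversary already forces the $(r{+}1)$st violation. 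Your phrasing (``spending some corruptions on a liveness stall,'' ``gives strictly more than $x_{r+1}$'') inverts this trade-off and would not yield the strict inequality.
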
 
\begin{proof} 
Let $\rho_C$ and $\rho_L$ be such that $\rho_C+2\rho_L=1$.
Recall that, in Section \ref{rmsec}, we set $ x_0=0$ and $x_{r+1}=x_r+\rho_C(1-x_r)$, and then defined:
\[ g_1(r)=\text{min} \{ x_{r+1}, 1-\rho_L \}, \hspace{0.3cm}  g_2(r)=\text{min} \{ x_r+\rho_L(1-x_r),1-\rho_L \}. \]
 Suppose $g_2'\leq 1-\rho_L$ and $g_2'\geq g_2$. Let $\mathcal{C}$ be the set of all SMR protocols for partial synchrony with   recoverable liveness resilience $\geq g_2'$ when message delays are bounded by $\Delta^*$. We prove $\dagger_r$ below by induction on $r$: 
\begin{enumerate} 
\item[$\dagger_r$]: If $g_1'\leq 1-\rho_L$ and $\mathcal{P}\in \mathcal{C}$ has recoverable consistency resilience $\geq g_1'$ when message delays are bounded by $\Delta^*$, then: $g_1'(r)\leq g_1(r)$, and if $g_2'(r)>g_2(r)$ then $g_1'(r)<g_1(r)$. 
\end{enumerate} 
Given $r\in \mathbb{N}_{\geq 0}$, suppose  $\dagger_{r'}$ holds for all $r'<r$ in $\mathbb{N}_{\geq 0}$. If $x_r\geq 1-\rho_L$ then there is nothing to prove. So, suppose otherwise.  
Let $\rho^*$ be such that $g_2'(r)=x_r+\rho^*(1-x_r)$, noting that $\rho^*\geq \rho_L$ because $g_2'(r)\geq g_2(r)$. Suppose $\mathcal{P}\in \mathcal{C}$ and choose arbitrary $\rho > x_r+(1-2\rho^*)(1-x_r)$. It suffices to show that if the adversary is only $\rho$-bounded, then there are executions of $\mathcal{P}$ with $r+1$ consistency violations. 

\vspace{0.2cm} 
From the induction hypothesis it follows that for each $\epsilon>0$ there exist executions of $\mathcal{P}$ in which the adversary is $(x_r+\epsilon)$-bounded with  $r$ consistency violations. By standard simulation arguments, there also exist such executions for arbitrarily large process sets $\Pi=\{ p_1,\dots, p_n \}$. For  $r=0$ this claim is immediate. If $r>0$, note that $\dagger_{r-1}$ applies to arbitrary $\mathcal{P}'\in \mathcal{C}$. If there existed $\epsilon>0$ and $N$ such that all executions of $\mathcal{P}$ with a process set of size $n\geq N$ in which the adversary is $(x_r+\epsilon)$-bounded have at most $r-1$ consistency violations, then there would exist protocols in $\mathcal{C}$ which achieve the same for $n<N$ by having each process simulate $\mathcal{P}$ for $\lceil N/n \rceil$ processes, contrary to the induction hypothesis. This means we can find $n$, $\Pi=\{ p_1,\dots, p_n \}$ and $\Pi_1-\Pi_4$ which is a partition of $\Pi$ such that: 

\begin{enumerate} 
\item $|\Pi_1 \cup \Pi_3|=|\Pi_1\cup \Pi_4|<x_r+ \rho^*(1-x_r)$, 
\item $|\Pi_1 \cup \Pi_2|<\rho n$,  
\end{enumerate} 
 and such that there exist executions of $\mathcal{P}$ with process set $\Pi$ and $r$ consistency violations in which only the processes in $\Pi_1$ are faulty. Let $\mathcal{E}_0$ be such an execution, and let $t_0$ be such that $M_c(t_0)$ has $r$ consistency violations in $\mathcal{E}_0$.   We may assume that $\mathcal{E}_0$ has no more than $r$ consistency violations, since otherwise this is already an execution with $r+1$ consistency violations in which the adversary is $\rho$-bounded. 
 
 \vspace{0.2cm} 
 Let $\ell$ and $d$ be as in the statement of the lemma.  Without loss of generality, suppose $d>0$. Suppose $\Delta^*>\ell$ and set $\Delta=1$ (this choice of $\Delta$ can be made consistent with $\mathcal{E}_0$ by setting GST$\geq t_0$).  Let $\mathtt{tr}_1$ and $\mathtt{tr}_2$ be distinct transactions that are not sent by the environment to any process prior to $t_0$. We consider three further executions of $\mathcal{P}$, all of which are identical to $\mathcal{E}_0$ until after $t_0$: 
 
 \vspace{0.2cm} 
\noindent \textbf{Execution} $\mathcal{E}_1$. Processes in $\Pi_2 \cup \Pi_3$ are correct.  Processes in $\Pi_1 \cup \Pi_4$  are faulty. Processes in $\Pi_1$ perform no action after $t_0$.  Processes in $\Pi_4$ act correctly, except that they perform no action at timeslots $\geq t_0+d$. 
GST$=t_0$ and processes in $\Pi_2$ receive $\mathtt{tr}_1$ at $t_0+d$. 

 \vspace{0.1cm} 
\noindent \textbf{Execution} $\mathcal{E}_2$. Processes in $\Pi_2 \cup \Pi_4$ are correct.  Processes in $\Pi_1 \cup \Pi_3$  are faulty. Processes in $\Pi_1$ perform no action after $t_0$.  Processes in $\Pi_3$ act correctly, except that they perform no action at timeslots $\geq t_0+d$. 
GST$=t_0$ and processes in $\Pi_2$ receive $\mathtt{tr}_2$ at $t_0+d$. 

\vspace{0.1cm} 
\noindent \textbf{Execution} $\mathcal{E}_3$. Processes in $\Pi_3 \cup \Pi_4$ are correct. Processes in $\Pi_1 \cup \Pi_2$ are faulty. Processes in $\Pi_1$ perform no action after $t_0$. The execution is identical to $\mathcal{E}_1$ and $\mathcal{E}_2$ prior to $t_0+d$. At $t_0+d$, processes in $\Pi_2$ receive $\mathtt{tr}_1$ and $\mathtt{tr}_2$.
GST$=t_0+d+\ell+1$. At timeslots in $[t_0+d+\ell]$,  the processes in $\Pi_2$ simulate two simultaneous executions $\mathcal{E}_3'$ and $\mathcal{E}_3''$, sending messages as instructed in these two executions, unless explicitly stated otherwise. In  $\mathcal{E}_3'$, each process in $\Pi_2$ acts exactly as if correct,  except that it (i)  ignores receipt of $\mathtt{tr}_2$, (ii)  ignores messages from, and does not send messages to, processes in $\Pi_4$, and (iii) ignores messages sent by processes in $\Pi_2$ corresponding to $\mathcal{E}_3''$. 
In  $\mathcal{E}_3''$, each process in $\Pi_2$ acts exactly as if correct,  except that it (i)  ignores receipt of $\mathtt{tr}_1$, (ii)  ignores messages from, and does not send messages to, processes in $\Pi_3$, and (iii) ignores messages sent by processes in $\Pi_2$ corresponding to $\mathcal{E}_3'$. 
 After GST, processes in $\Pi_2$ carry out no action.  
For messages sent in timeslots in $[t_0+d,\text{GST})$, message delivery is as follows: 
\begin{itemize} 
\item Messages sent by processes in $\Pi_3$ to processes in $\Pi_4$  do not arrive until GST.
\item Symmetrically, messages sent by processes in $\Pi_4$ to processes in $\Pi_3$ do not arrive until GST. 
\item For $k\in \{ 3,4 \}$, messages sent from processes in $\Pi_k$ to processes in $\Pi_2$ or $\Pi_k$  arrive at the next timeslot. 
\item Messages sent by processes in $\Pi_2$ to processes in $\Pi_2$, $\Pi_3$ or $\Pi_4$  arrive at the next timeslot. 
\end{itemize} 
Since $|\Pi_1 \cup \Pi_4|<g_2'(r)$,  all processes in  $\Pi_3$ must finalize $\mathtt{tr}_1$ (and not $\mathtt{tr}_2$) by timeslot $t_0+d+\ell$ in $\mathcal{E}_1$. Similarly,  all processes in  $\Pi_4$ must finalize $\mathtt{tr}_2$ (and not $\mathtt{tr}_1$) by $t_0+d+\ell $ in $\mathcal{E}_2$. Since $\mathcal{E}_3$ is indistinguishable from $\mathcal{E}_1$ until  GST for processes in $\Pi_3$,  all processes in  $\Pi_3$ must finalize $\mathtt{tr}_1$ (and not $\mathtt{tr}_2$) by timeslot $t_0+d+\ell$ in $\mathcal{E}_3$. Since $\mathcal{E}_3$ is indistinguishable from $\mathcal{E}_2$ until  GST for processes in $\Pi_4$,  all processes in  $\Pi_4$ must finalize $\mathtt{tr}_2$ (and not $\mathtt{tr}_1$) by timeslot $t_0+d+\ell$ in $\mathcal{E}_3$. Execution $\mathcal{E}_3$ is therefore an execution of $\mathcal{P}$ with $r+1$ consistency violations and in which the adversary is $\rho$-bounded. 
\end{proof}

\section{The proof of Theorem \ref{mt}} \label{pmt} 

We assume throughout this section that the adversary is
$(1-\rho_L)$-bounded 
and that message delays are bounded by $\Delta^*$. 

\vspace{0.2cm} 
\noindent \textbf{Some further terminology}. We make the following definitions: 
\begin{itemize} 
\item Process $p_i$ begins the $r^{\text{th}}$ execution of the recovery procedure at the first timeslot at which  $\mathtt{r}=r$ and $\mathtt{rec}=1$ (where those values are as locally defined for $p_i$). 
\item The $r^{\text{th}}$ execution of the recovery procedure  begins at the first timeslot at which some correct process begins the  $r^{\text{th}}$ execution of the recovery procedure. 
\item Execution $\mathcal{E}_r$ begins at the first timeslot at which some correct process begins execution $\mathcal{E}_r$. If $r>1$, then the $(r-1)^{\text{th}}$ execution of the recovery procedure also ends at this timeslot. 
\item If a QC/finish-QC is regarded as valid by all correct processes, we refer to it as a valid QC/finish-QC. 
\end{itemize}

\begin{lemma} \label{ml} 
If the $r^{\text{th}}$ execution of the recovery procedure begins at $t_0$, then: 
\begin{enumerate} 
\item[(i)] All correct processes begin the $r^{\text{th}}$ execution of the recovery procedure by time $t_0+\Delta^*$. 
\item[(ii)] There exists a unique $r$-proposal, $P$ say,  that receives a finish-QC that is regarded as valid by some correct process. 
\item[(iii)] If $v_0$ is least such that $\mathtt{lead}(r,v_0)$ is correct, all correct processes receive a valid finish-QC for $P$ by time $t_0+2\Delta^*+8v_0\Delta^*$. 
\item [(iv)] All correct processes begin execution $\mathcal{E}_{r+1}$ within time $\Delta^*$ of each other and with the same local values for $\Pi_{r+1}$ and $\mathtt{log}_{G_{r+1}}$.  
\end{enumerate} 
\end{lemma}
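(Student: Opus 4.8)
The plan is to establish parts (i)--(iv) simultaneously by induction on $r$, the base case $r=1$ using only that every correct process starts $\mathcal{E}_1$ at time $0$ (Line~\ref{start}); for $r>1$ the inductive hypothesis supplies part~(iv) for $r-1$, i.e.\ that all correct processes run $\mathcal{E}_r$ in $\Delta^*$-synchrony and with common values of $\Pi_r$ and $\mathtt{log}_{G_r}$. Two facts are used throughout. First (gossip plus the delay bound $\Delta^*$): any message in $M_c(t)$ lies in $M_c(t+\Delta^*)$. Second, a counting invariant: the inductive hypothesis implies that every process removed in forming $\Pi_r$ from $\Pi$ carries a proof of guilt (validity condition~(iii) of earlier proposals), hence is faulty; combining this with $\rho_C+2\rho_L=1$, the $(1-\rho_L)$-boundedness of the adversary, and $|F|\ge\rho_C|\Pi_r|$ (condition~(ii)), a short computation gives that for every valid $r$-proposal $P=(F,\sigma,M,r)$ the faulty processes in $\Pi_r-F$ number strictly fewer than $\tfrac12|\Pi_r-F|$. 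In particular, correct processes alone can assemble QCs and finish-QCs, and every QC or finish-QC regarded as valid by a correct process contains a vote from a correct process.

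Part~(i) is then immediate: if correct $p$ begins the $r$-th recovery first, at $t_0$, it is because the set of $\mathcal{E}_r$-messages it has received by $t_0$ has a consistency violation; these finitely many $\mathcal{E}_r$-tagged messages are received (and so recorded in $\mathtt{M}_{q,r}$) by every correct $q$ by time $t_0+\Delta^*$, and by the inductive hypothesis $q$ has begun $\mathcal{E}_r$ by then, so $q$ is either still running $\mathcal{E}_r$ (whereupon it immediately detects the violation and begins the $r$-th recovery) or has already begun the $r$-th (or a later) recovery. For uniqueness in part~(ii), I would make precise the locking argument sketched in Section~\ref{rpi}: let $v^\ast$ be the least view in which some correct process casts a finish vote, say for the $r$-proposal $P^\ast$; that process first set its lock to a valid QC $Q$ for some $(r,v^\ast)$-proposal $R$ with $\mathtt{P}(Q)=P^\ast$ and then saw its $(r,v^\ast)$-timer expire without detecting equivocation. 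Using part~(i) (all correct enter view $(r,v^\ast)$ within $\Delta^*$ of one another) and gossip, one shows that every correct process locks during view $v^\ast$ on some valid QC for the same $R$, and that no other $(r,v^\ast)$-proposal ever gets a QC that any correct process deems valid---otherwise equivocation in view $v^\ast$ would have been detected before the timer expired. An induction on views $v>v^\ast$, driven by validity condition~(viii), then shows that no correct process ever votes for a proposal with $\mathtt{P}\neq P^\ast$; since by minimality no correct process casts a finish vote before $v^\ast$, every correct finish vote is for $P^\ast$, so every finish-QC deemed valid by a correct process---which contains a correct finish vote---is for $P^\ast$. This is the unique $P$ of the statement.

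For existence in part~(ii) and for part~(iii), suppose towards a contradiction that no $r$-proposal ever receives a finish-QC deemed valid by a correct process; then no correct process ever starts $\mathcal{E}_{r+1}$, so all correct run the recovery through every view. Let $v_0$ be least with $\mathtt{lead}(r,v_0)$ correct, call it $p_i$. It waits $2\Delta^*$ into view $(r,v_0)$ before proposing (Line~\ref{viewv}); a global-time bookkeeping using part~(i) and gossip shows that by that moment $p_i$ has received an $r$-genesis message from every process in $P_j(r)$ for each correct $j$ (so condition~(v) can be met) and the highest lock currently held by any correct process (so condition~(viii) can be met), and hence issues an $(r,v_0)$-proposal that every correct process regards as valid within $\Delta^*$; all correct then vote within a further $\Delta^*$, form a QC and lock within another $\Delta^*$, and---since $p_i$ is correct and signatures cannot be forged, so no correct process detects equivocation in view $v_0$---send finish votes $2\Delta^*$ after locking, which within one more $\Delta^*$ yield a finish-QC held by every correct process; folding in the $\le\Delta^*$ spread of view-entry times, the total is at most $t_0+2\Delta^*+8v_0\Delta^*$, and this schedule never exceeds the $8\Delta^*$ length of view $v_0$ before the finish votes are sent. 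This contradicts the assumption (giving existence) and, since any finish-QC that instead formed in an earlier view is necessarily for $P^\ast$ and reaches all correct processes well within the same bound, also gives part~(iii). Finally, part~(iv): a correct process starts $\mathcal{E}_{r+1}$ exactly upon first holding a valid finish-QC for some $r$-proposal (Lines~\ref{finish1}--\ref{finish2}), which by part~(ii) must be $P=(F,\sigma,M,r)$, so it sets $\Pi_{r+1}:=\Pi_r-F$ and $\mathtt{log}_{G_{r+1}}:=\sigma$; these agree across correct processes since $\Pi_r$ and $\mathtt{log}_{G_r}$ do (inductive hypothesis), and by part~(iii) together with gossip the start times of $\mathcal{E}_{r+1}$ differ by at most $\Delta^*$.

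The main obstacle is the timing analysis behind existence and part~(iii): one must check that inside the single correct-leader view the chain ``leader waits $2\Delta^*$ $\to$ proposes $\to$ all vote $\to$ QC and lock $\to$ $2\Delta^*$ timer $\to$ finish votes $\to$ finish-QC'' both fits within the $8\Delta^*$ view budget under a $\Delta^*$ skew in view-entry times and leaves the leader, at the instant it proposes, already in possession of every correct lock and every required $r$-genesis message---this is precisely what pins down the $2\Delta^*$ waits and $8\Delta^*$ view length in the pseudocode. A secondary, easily overlooked point is the counting invariant that all previously removed processes are genuinely faulty, without which the ``faulty minority of $\Pi_r-F$'' fact---underpinning both QC formation and the uniqueness argument---would fail.
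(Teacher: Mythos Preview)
Your proposal is correct and follows essentially the same approach as the paper's proof: induction on $r$, gossip plus part~(i) to get all correct processes $\Delta^*$-synchronized in recovery, the lock-and-equivocation argument for uniqueness in (ii), and a timing walk through the first correct-leader view for existence and (iii). Your explicit statement of the ``counting invariant'' (correct majority in $\Pi_r-F$) is something the paper leaves implicit but uses throughout. The one place your write-up is slightly looser than the paper's is the derivation of the time bound in (iii): you assume for contradiction that \emph{no} finish-QC ever forms and then patch in the earlier-view case, whereas the paper assumes directly that not all correct processes receive a finish-QC by $t_0+2\Delta^*+8v_0\Delta^*$, infers via gossip that \emph{no} correct process has one by $t_0+\Delta^*+8v_0\Delta^*$, and then runs the view-$v_0$ timing chain---this single contradiction hypothesis handles existence and the bound simultaneously and avoids any case split.
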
 
\begin{proof} 
The proof is by induction on $r$. So, suppose the lemma  holds for all $s<r$. 
The  $r^{\text{th}}$ execution of the recovery procedure cannot begin before execution $\mathcal{E}_r$. 
So, since the lemma holds for  $r-1$ (if $r>1$), it follows that all correct processes begin $\mathcal{E}_r$  by $t_0+\Delta^*$. All messages received by the first correct process to begin the $r^{\text{th}}$ execution of the recovery procedure will also be received by all correct processes by $t_0+\Delta^*$, meaning that all correct processes will begin the $r^{\text{th}}$ execution of the recovery procedure by that time (and within $\Delta^*$ of each other). This establishes statement (i) of the lemma.

\vspace{0.2cm} By the induction hypothesis, all correct processes have
the same local values for $\Pi_{j}$ and $\mathtt{log}_{G_{j}}$, so
that any QC or finish-QC that is regarded as valid by any correct
process will be regarded as valid by all correct processes. The
remainder of the proof of Lemma~\ref{ml} proceeds much as in the
informal discussion at the end of Section~\ref{formalspec}. Below, we
fill in the details.

\vspace{0.1cm} 
Suppose first that there exists some least $v$ such that some correct process  $p_i$ sends  a finish vote for some $r$-proposal,  $P$ say, during view $(r,v)$. Then, at some timeslot $t$ while in view $(r,v)$,  $p_i$ must set $Q_i^+:=Q$, for some $Q$ such that: (i) $Q$  is a valid QC for an $(r,v)$-proposal $R$, and; (ii)  $\mathtt{P}(Q)=P$.
 We will show that: 
\begin{enumerate} 
\item All correct processes set their locks to some valid QC for $R$ while in view $(r,v)$.
\item No $(r,v)$-proposal other than $R$ can receive a QC that is regarded as valid by any correct process. 
\end{enumerate}

\vspace{0.2cm} 
To see that (1) holds, note that all correct processes will be in view $(r,v)$ at $t+\Delta^*$, since $p_i$ is still in view $(r,v)$ at time $t+2\Delta^*$ and all correct processes begin the recovery procedure (and so each view of the recovery procedure) within time $\Delta^*$ of each other. All correct processes will also have received $Q$ by  $t+\Delta^*$. 
 They will therefore set their lock to be some QC for $R$, unless they have already received a valid QC for some $(r,v)$-proposal $R'\neq R$. The latter case is not possible, since then $p_i$ would detect equivocation in view $(r,v)$ by $t+2\Delta^*$, and so would not send the finish vote for $P$. 

\vspace{0.2cm} 
To see that (2) holds, note that (as reasoned above) all correct processes will be in view $(r,v)$ at $t+\Delta^*$ and will have received $R$ by this time. Item (ix) in the validity conditions for $(r,v)$-proposals prevents correct processes  voting for $(r,v)$-proposals  $R'\neq R$ at  timeslots after $t+\Delta^*$, and correct processes cannot vote for such proposals at any timeslot $\leq t+\Delta^*$ because $p_i$ would detect equivocation in view $(r,v)$ before sending the finish vote for $P$ in this case. Since any QC for an $(r,v)$-proposal that is regarded as valid by any correct process must include at least one vote by a correct process, the claim follows. 

\vspace{0.2cm} 
From (2) it follows that no correct process sends a finish vote for any $r$-proposal other than $P$ while in view $v$. Next, we  argue by induction on $v'>v$ that no correct process votes for any proposal $R'=(P',v',Q')_{p_k}$ such that $P'\neq P$. So, suppose the claim holds for all $v''\in (v,v')$, meaning that no $R'=(P',v'',Q')_{p_k}$ with $P'\neq P$ and $v''\in (v,v')$ can receive a valid QC. From (1), it follows that every correct process $p_j$ has their local lock $Q_j^+$ set so that $\mathtt{view}(Q_j^+)\geq (r,v)$ upon entering view $v'$. They therefore cannot vote for any proposal $R'=(P',v',Q')_{p_k}$ such that $P'\neq P$, since for $R$  to be valid it must hold that $\mathtt{P}(Q')=P'$, so that $\mathtt{view}(Q')<(r,v)$, while $\mathtt{view}(Q')\geq \mathtt{view}(Q_j^+)\geq (r,v)$.

\vspace{0.2cm} So far, we have established that at most one
$r$-proposal receives a valid finish-QC. To establish (ii) and (iii)
in the statement of the lemma, let $t_0$ and $v_0$ be as specified in the
statement of the lemma. Note that $t_0+2\Delta^*+8v_0\Delta^*$ is the
first timeslot at which any correct process can enter view
$(r,v_0+1)$. Suppose, towards a contradiction, that it is not the case
that all correct processes receive a valid finish-QC for some
$r$-proposal by $t_0+2\Delta^*+8v_0\Delta^*$. This means that no
correct process receives a valid finish-QC for any $r$-proposal by
$t_0+\Delta^*+8v_0\Delta^*$, so all correct processes are still
executing the recovery procedure and are in view $(r,v)$ at this time.
 
Let $i$ be such that $p_i=\mathtt{lead}(r,v_0)$.  Since $p_i$ waits $2\Delta^*$ after entering the view before disseminating an $(r,v_0)$-proposal $R=(P,v_0,Q)_{p_i}$, and since all correct processes enter the view at most $\Delta^*$ after $p_i$, it will have seen all locks held by correct processes by this time, and will have received $r$-genesis messages from all processes in any set $P_{j}(r)$ for correct $p_j$. Process $p_i$ will disseminate the $(r,v_0)$-proposal $R$ by  $t_0+5\Delta^* +8(v_0-1)\Delta^*$, since $p_i$  begins the $r^{\text{th}}$ execution of the recovery procedure by time $t_0+\Delta^*$. All correct processes  will have received the  proposal by timeslot  $t_0+6\Delta^* +8(v_0-1)\Delta^*$ and will regard the proposal as valid by this time (having received all $(\mathcal{P},\Pi_r,\mathtt{log}_{G_r})$-proofs of guilt that $p_i$ received by $t_0+5\Delta^* +8(v_0-1)\Delta^*$). All correct processes will  therefore vote for the proposal by this time and will receive a valid QC for the proposal by $t_0+7\Delta^* +8(v_0-1)\Delta^*$, and so will set their locks by this time.  They will still be in view $(r,v)$ by time $t_0+9\Delta^*+8(v_0-1)\Delta^*= t_0+\Delta^*+8v_0\Delta^*$  and will  send $r$-finish votes for $P$ by this time. All correct processes will therefore receive a valid finish-QC for $P$ by time $t_0+2\Delta^* + 8v_0\Delta^*$, giving the required contradiction.

\vspace{0.2cm} 
To see that statement (iv) of the lemma holds, note that the first correct process to begin execution $\mathcal{E}_{r+1}$ does so upon receiving a valid finish-QC for some $r$-proposal $P$ at some timeslot $t$.  This cannot happen until at least $2\Delta^*$ has passed since the $r^{\text{th}}$ execution of the recovery procedure began, meaning that all correct processes will have begun the $r^{\text{th}}$ execution of the recovery procedure by timeslot $t$. They will all receive a valid finish-QC for $P$ by $t+\Delta^*$, and so will begin execution $\mathcal{E}_{r+1}$ by that time. 
\end{proof} 

\noindent \textbf{Further notation}. Given statement (iv) of Lemma \ref{ml}, each value $\Pi_r$ or $\mathtt{log}_{G_r}$ is either undefined at all timeslots for all correct processes, or else is eventually defined and takes the same value for each correct process. We may therefore write $\Pi_r$ and $\mathtt{log}_{G_r}$ to denote these globally agreed values.

\begin{lemma} 
If $p_i$ is correct, then, at the end of every timeslot,  $\mathtt{log}_i=\mathcal{F}(\mathtt{M}_i)$. 
\end{lemma}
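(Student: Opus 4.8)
The plan is to prove the statement by tracking, at the end of an arbitrary timeslot $t$, how the computation of $\mathcal{F}(\mathtt{M}_i)$ (Algorithm~2) replays the sequence of executions $\mathcal{E}_1,\mathcal{E}_2,\dots$ that $p_i$ has so far worked through. Before the main argument I would record three facts. (a) The messages in $\mathtt{M}_i$ tagged for $\mathcal{E}_r$ are exactly $\mathtt{M}_{i,r}$, and the inner finalization function $\mathcal{F}(\Pi_r,\mathtt{log}_{G_r})$ ignores messages not tagged for $\mathcal{E}_r$; hence $\mathtt{M}_i$ and $\mathtt{M}_{i,r}$ have the same $\mathcal{F}(\Pi_r,\mathtt{log}_{G_r})$-certificates and the same consistency violations w.r.t.\ $\mathcal{F}(\Pi_r,\mathtt{log}_{G_r})$. (b) Message sets are nondecreasing, and a consistency violation, once present in $\mathtt{M}_{i,r}$, persists; so while $p_i$ runs $\mathcal{E}_r$ with $\mathtt{rec}=0$, a short induction over timeslots shows that $\mathtt{log}_i$ always equals the longest $\sigma$ for which $\mathtt{M}_{i,r}$ contains an $\mathcal{F}(\Pi_r,\mathtt{log}_{G_r})$-certificate (``longest'' being well-defined because there is no consistency violation in this phase). (c) By statement~(iv) of Lemma~\ref{ml} and the remark following it, the values $\Pi_r$ and $\mathtt{log}_{G_r}$ are globally agreed, so a finish-QC (or QC) is ``valid'' in the sense of Algorithm~2 exactly when it is regarded as valid by every correct process; combined with statement~(ii) of Lemma~\ref{ml}, for each $r$ there is at most one $r$-proposal for which $\mathtt{M}_i$ contains a valid finish-QC, and when $p_i$ has moved from $\mathcal{E}_r$ to $\mathcal{E}_{r+1}$ it did so on exactly that $r$-proposal, which fixes $\Pi_{r+1}$ and $\mathtt{log}_{G_{r+1}}$.

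Now let $r$ be $p_i$'s current value of $\mathtt{r}$. Using (c) I would argue by an inner induction on $r'\in\{1,\dots,r-1\}$ that the loop of Algorithm~2 run on $M=\mathtt{M}_i$ passes through iteration $r'$ with the same $\Pi_{r'},\mathtt{log}_{G_{r'}}$ that $p_i$ holds: $\mathtt{M}_{i,r'}$ has a consistency violation w.r.t.\ $\mathcal{F}(\Pi_{r'},\mathtt{log}_{G_{r'}})$ (this is why $p_i$ left $\mathcal{E}_{r'}$), so by~(a) the loop does not halt at iteration $r'$ via its first branch; and $\mathtt{M}_i$ contains a valid finish-QC for the unique $r'$-proposal $p_i$ used to start $\mathcal{E}_{r'+1}$, so the loop advances to iteration $r'+1$ with the matching $\Pi_{r'+1},\mathtt{log}_{G_{r'+1}}$. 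It then remains to examine iteration $r$, in two cases. If $p_i$ currently has $\mathtt{rec}=0$, then $\mathtt{M}_{i,r}$ has no consistency violation w.r.t.\ $\mathcal{F}(\Pi_r,\mathtt{log}_{G_r})$ (otherwise the recovery-triggering instructions would have fired), so Algorithm~2 halts at iteration $r$ returning the longest $\sigma$ for which $\mathtt{M}_i$ is an $\mathcal{F}(\Pi_r,\mathtt{log}_{G_r})$-certificate, which by~(a) and~(b) equals $\mathtt{log}_i$. If $p_i$ currently has $\mathtt{rec}=1$, then $\mathtt{M}_{i,r}$ does have a consistency violation (this is what triggered recovery and set $\mathtt{log}_i:=\mathtt{log}_{G_r}$), and $\mathtt{M}_i$ contains no valid finish-QC for any $r$-proposal---if it contained one, that proposal would be the unique one of~(c) and receiving its finish-QC would already have moved $p_i$ to $\mathcal{E}_{r+1}$, contradicting $\mathtt{rec}=1$---so Algorithm~2 halts at iteration $r$ via its second branch, returning $\mathtt{log}_{G_r}$, which equals $\mathtt{log}_i$ since no instruction executed while $\mathtt{rec}=1$ alters $\mathtt{log}_i$. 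This handles every timeslot.

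The main obstacle I anticipate is the bookkeeping around the single timeslot in which $p_i$ changes phase. When $p_i$ enters the recovery procedure for $\mathcal{E}_r$ one must check that no later instruction in that same timeslot moves $\mathtt{log}_i$ away from $\mathtt{log}_{G_r}$ (true: the recovery-procedure instructions touch $\mathtt{log}_i$ only upon receipt of a finish-QC, which cannot occur in that timeslot). More delicate is the timeslot in which $p_i$ begins a fresh execution $\mathcal{E}_{r+1}$ and sets $\mathtt{log}_i:=\mathtt{log}_{G_{r+1}}$: the invariant of~(b) then requires $\mathtt{log}_i$ to already be brought up to the longest sequence certified in $\mathtt{M}_{i,r+1}$, so one must either adopt the convention that $p_i$ processes the instructions of a timeslot to a fixed point (re-applying the log-extension rule after the transition) or the convention that $\mathtt{M}_{i,r+1}$ is empty until $p_i$ begins $\mathcal{E}_{r+1}$. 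Making this convention explicit is the only subtle point; given it, everything above is a routine unwinding of Algorithms~1 and~2 against Lemma~\ref{ml}.
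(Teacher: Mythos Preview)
Your proposal is correct and takes essentially the same approach as the paper: split on whether $\mathtt{rec}=0$ or $\mathtt{rec}=1$ at the end of the timeslot, use Lemma~\ref{ml} to guarantee that for each $r'<\mathtt{r}$ the $r'$-proposal with a valid finish-QC in $\mathtt{M}_i$ is unique (so Algorithm~2's loop reproduces $p_i$'s values $\Pi_{r'},\mathtt{log}_{G_{r'}}$), and then compare the return value at iteration~$\mathtt{r}$ with $\mathtt{log}_i$. You are more explicit than the paper about the equivalence of $\mathtt{M}_i$ and $\mathtt{M}_{i,r}$ for $\mathcal{F}(\Pi_r,\mathtt{log}_{G_r})$-certificates (the paper handles this in a parenthetical) and about the phase-transition bookkeeping, but the argument is the same.
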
 
\begin{proof} 
Let $\mathtt{rec}$, $\mathtt{r}$, $\mathtt{M}_i$, $\mathtt{M}_{i,r}$ and $\mathtt{log}_i$ be as locally defined for $p_i$. Consider first the case that $\mathtt{rec}=0$ at the end of timeslot $t$.  In this case, $\mathtt{M}_i$ has a consistency violation with respect to $\mathcal{F}(\Pi_r,\mathtt{log}_{G_r})$ for each $r<\mathtt{r}$ but does not have a consistency violation with respect to $\mathcal{F}(\Pi_\mathtt{r},\mathtt{log}_{G_\mathtt{r}})$. Also, $\mathtt{M}_{i}$ contains a valid finish-QC for some $r$-proposal for each $r<\mathtt{r}$, which must be unique by Lemma \ref{ml}. At the end of timeslot $t$, $\mathtt{log}_i$ is the longest string $\sigma$ such that $\mathtt{M}_i$ (and $\mathtt{M}_{i,\mathtt{r}}$) is an  $\mathcal{F}(\Pi_\mathtt{r},\mathtt{log}_{G_\mathtt{r}})$-certificate for $\sigma$. The iteration defining $\mathcal{F}$ in Algorithm 2 will not return a value until it has defined all values $\Pi_r$ and $\mathtt{log}_{G_r}$ for $r\leq \mathtt{r}$. Upon discovering that $\mathtt{M}_i$ does not have a consistency violation with respect to $\mathcal{F}(\Pi_\mathtt{r},\mathtt{log}_{G_\mathtt{r}})$, it will return the same value $\sigma$, as the longest string for which  $\mathtt{M}_i$ is an  $\mathcal{F}(\Pi_\mathtt{r},\mathtt{log}_{G_\mathtt{r}})$-certificate. 

Next, consider the case that  $\mathtt{rec}=1$ at the end of timeslot $t$.  In this case, $\mathtt{M}_i$ has a consistency violation with respect to $\mathcal{F}(\Pi_r,\mathtt{log}_{G_r})$ for each $r\leq \mathtt{r}$, and also  contains a valid finish-QC for some $r$-proposal for each $r<\mathtt{r}$, which must be unique by Lemma \ref{ml}. However, $\mathtt{M}_i$ does not contain  a valid finish-QC for any  $\mathtt{r}$-proposal.  At the end of timeslot $t$, $\mathtt{log}_i=\mathtt{log}_{G_\mathtt{r}}$. The iteration defining $\mathcal{F}$ will not  return a value until it has defined all values $\Pi_r$ and $\mathtt{log}_{G_r}$ for $r\leq \mathtt{r}$, and will then also return $\mathtt{log}_{G_\mathtt{r}}$.
\end{proof} 

\begin{lemma} \label{cons} 
 The wrapper has rollback bounded by $2\Delta^*$. Also, $\mathtt{log}_{G_{r+1}}\succeq \mathtt{log}_{G_r}$ whenever $\mathtt{log}_{G_{r+1}}\downarrow$.   
\end{lemma}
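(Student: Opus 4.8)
The plan is to prove the genesis-log monotonicity assertion first, since it feeds into the rollback argument, and then to deduce bounded rollback by a case analysis on what $p_i$ does during the relevant $2\Delta^*$ window.

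For monotonicity, suppose $\mathtt{log}_{G_{r+1}}\downarrow$. By Lemma~\ref{ml} it equals the sequence $\sigma$ appearing in the unique $r$-proposal $P=(F,\sigma,M,r)$ that receives a valid finish-QC, and any such finish-QC contains a vote by a correct process, which forces some correct process to have regarded the underlying $(r,v)$-proposal as valid; so validity conditions (ii)--(vi) hold for $P$. By condition (iii), $F$ consists only of provably faulty processes, so $\Pi_r-F$ contains every correct process; since the adversary is $(1-\rho_L)$-bounded, no correct process is ever removed, and $|F|\geq\rho_C|\Pi_r|$, a short count (cf.\ Section~\ref{rpi}, using $1-\rho_C=2\rho_L$) shows correct processes form a strict majority of $\Pi_r-F$. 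Because all correct processes enter recovery $r$ within $\Delta^*$ (Lemma~\ref{ml}(i)) and each then waits $2\Delta^*$ before fixing $P_j(r)$, condition (v) guarantees that $M$ contains the (unique, unforgeable) $r$-genesis message of every correct process, and each such message extends $\mathtt{log}_{G_r}$ because $\mathtt{log}_j$ never drops below $\mathtt{log}_{G_r}$ while $p_j$ is running $\mathcal{E}_r$. Hence more than $\tfrac{1}{2}|\Pi_r-F|$ elements of $M$ extend $\mathtt{log}_{G_r}$, and condition (vi) yields $\sigma\succeq\mathtt{log}_{G_r}$. Iterating, $\mathtt{log}_{G_s}\succeq\mathtt{log}_{G_r}$ for all defined $s\geq r$.

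Now fix correct $p_i,p_j$ and a sequence $\sigma$ with $\sigma\preceq\mathtt{log}_i(t')$ for all $t'\in[t,t+2\Delta^*]$; I must show $\sigma\preceq\mathtt{log}_j(t')$ for all sufficiently large $t'$. Since $\mathtt{log}_i$ is reset only to values of the form $\mathtt{log}_{G_s}$ (when $p_i$ enters or leaves a recovery) and otherwise only grows within the current execution, one of two situations obtains. Either $p_i$ runs a single execution $\mathcal{E}_r$ with $\mathtt{rec}=0$ throughout the window, in which case the strong-finalization clause gives $\mathtt{log}^*_i\succeq\sigma$ by time $t+2\Delta^*$; or at some timeslot in the window $\mathtt{log}_i$ equals $\mathtt{log}_{G_r}$ for some $r$ (because $p_i$ is in recovery $r$, entering it, or starting $\mathcal{E}_r$), in which case $\sigma\preceq\mathtt{log}_{G_r}$. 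In the latter case I am essentially done: by monotonicity $\mathtt{log}_{G_s}\succeq\sigma$ for all defined $s\geq r$; and since the adversary is $(1-\rho_L)$-bounded and each recovery removes at least one (provably faulty) process, there are only finitely many executions, the last $\mathcal{E}_k$ (with $k\geq r$) being reached by every correct process (Lemma~\ref{ml}) and never triggering recovery, so eventually $\mathtt{log}_j\succeq\mathtt{log}_{G_k}\succeq\sigma$ and stays so. It therefore suffices to handle the case where some correct $p_i$ sets $\mathtt{log}^*_i\succeq\sigma$ at a timeslot $t^*$ while running $\mathcal{E}_r$.

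This last case is the main obstacle --- the delicate analysis flagged in Section~\ref{ebr}. If $p_i$ changed executions during $[t^*-2\Delta^*,t^*]$ then $\mathtt{log}_i$ equalled some $\mathtt{log}_{G_s}$ there and we are back in the previous case; likewise if $\mathtt{log}_{G_r}\succeq\sigma$; so assume neither. Then $p_i$ held an $\mathcal{F}(\Pi_r,\mathtt{log}_{G_r})$-certificate $M\subseteq\mathtt{M}_{i,r}$ for $\sigma$ by time $t^*-2\Delta^*$, so by gossiping and Lemma~\ref{ml}(iv) every correct process has started $\mathcal{E}_r$ and received $M$ by $t^*-\Delta^*$. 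No correct process can have entered recovery $r$ by $t^*-\Delta^*$: otherwise the messages witnessing that consistency violation reach $p_i$ by $t^*$, and $p_i$ would then execute the recovery-triggering lines (which precede the strong-finalization lines within a timeslot) at some timeslot $\leq t^*$, contradicting that $p_i$ is running $\mathcal{E}_r$ and sets $\mathtt{log}^*_i$ at $t^*$. Consequently every correct $p_j$, while still in $\mathcal{E}_r$ with no consistency violation in $\mathtt{M}_{j,r}$, processes $M$ and sets $\mathtt{log}_j\succeq\sigma$ by $t^*-\Delta^*$; since $\mathtt{log}_j$ only grows within $\mathcal{E}_r$, it still extends $\sigma$ when $p_j$ enters recovery $r$, so $p_j$'s $r$-genesis message extends $\sigma$. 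Feeding these genesis messages into condition (vi) for the finish-QC'd $r$-proposal, exactly as in the monotonicity argument, gives $\mathtt{log}_{G_{r+1}}\succeq\sigma$ (and if no correct process ever leaves $\mathcal{E}_r$, then each $\mathtt{log}_j\succeq\sigma$ forever already); either way, the argument of the previous paragraph concludes the proof. The care required throughout is with the ordering of updates within a timeslot and with the $\Delta^*$-slack between different processes' progress, both of which are precisely accommodated by the $2\Delta^*$ waits.
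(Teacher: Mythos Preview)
Your proof is correct and follows essentially the same approach as the paper: both hinge on showing that if $p_i$ strongly finalizes $\sigma$ during $\mathcal{E}_r$ then every correct process finalizes $\sigma$ before entering recovery $r$ (via the $2\Delta^*$ round-trip and gossip argument), so that all correct $r$-genesis messages extend $\sigma$ and condition~(vi) forces $\mathtt{log}_{G_{r+1}}\succeq\sigma$; the monotonicity claim falls out of the same argument with $\sigma=\mathtt{log}_{G_r}$. Your presentation differs only in structure---you isolate monotonicity first and then give an explicit case split on whether $p_i$ transitions during the $2\Delta^*$ window, whereas the paper treats both claims together and leaves the connection between strong finalization and the rollback definition more implicit---but the underlying argument and the reliance on Lemma~\ref{ml} are identical.
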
 
\begin{proof} 
We say $p_i$ finalizes $\sigma$ if it sets $\mathtt{log}_i$ to extend $\sigma$ and that $p_i$ strongly finalizes $\sigma$ if it sets $\mathtt{log}_i^*$ to extend $\sigma$. Suppose  $p_i$ finalizes $\sigma$  while running $\mathcal{E}_r$ at $t$ because there exists $M\subseteq \mathtt{M}_{i,r}$ which is an $\mathcal{F}(\Pi_r,\mathtt{log}_{G_r})$-certificate for $\sigma$. By (iv) of Lemma \ref{ml},  every correct process $p_j$ will begin $\mathcal{E}_r$ by $t+\Delta^*$, and will receive the messages in $M$ by that time. This means $p_j$ will finalize $\sigma$, never to subsequently finalize any sequence  incompatible with $\sigma$ while running $\mathcal{E}_r$, unless  $\mathtt{M}_{j,r}$ has a consistency violation w.r.t.\ $\mathcal{F}(\Pi_r,\mathtt{log}_{G_r})$ by  $t+\Delta^*$.   In the latter case, $p_i$ will begin the recovery procedure by timeslot $t+2\Delta^*$ and will not strongly finalize $\sigma$ at that time. We conclude that, if $p_i$ strongly finalizes $\sigma$ while running $\mathcal{E}_r$, then all correct processes finalize $\sigma$ while running $\mathcal{E}_r$. 

\vspace{0.2cm} 
If $p_i$ strongly finalizes $\sigma$ while running $\mathcal{E}_r$ and if the $r^{\text{th}}$ execution of the recovery procedure does not begin at any timeslot, it follows that, for all correct $p_j$,  $\sigma \preceq \mathtt{log}_j$ thereafter. So, suppose that the $r^{\text{th}}$ execution of the recovery procedure begins at some timeslot $t_0$. 
Note that, if $p_j$ is correct, then it waits $2\Delta^*$ after beginning the $r^{\text{th}}$ execution of the recovery procedure before defining $P_j(r)$. By Lemma \ref{ml}, all correct processes begin the $r^{\text{th}}$ execution of the recovery procedure within time $\Delta^*$ of each other. Since correct processes send $r$-genesis messages immediately upon beginning the recovery procedure, it follows that $P_j(r)$ includes all correct processes. 

By Lemma \ref{ml}, there exists a unique $r$-proposal, $P=(F,\sigma',M',r)$ say, that receives a valid finish-QC. For this to occur, there must exist $v$ and an $(r,v)$-proposal $R=(P,v,\bot)$ (signed by $\mathtt{lead}(r,v)$) which receives a valid QC. This QC must include at least one vote by a correct process, $p_j$ say. It follows that $M'$ must contain $r$-genesis messages from every member of $P_j(r)$, and so from every correct process. As we noted previously, if $p_i$ strongly finalizes $\sigma$ while running $\mathcal{E}_r$, then every correct process must finalize $\sigma$ before beginning the $r^{\text{th}}$ execution of the recovery procedure. 
So, for each $r$-genesis message $(\text{gen},\sigma'',r)$ sent by a correct process, $\sigma''$ must extend $\sigma$, and also extends $\mathtt{log}_{G_r}$. It therefore holds that $\sigma $ (and $\mathtt{log}_{G_r}$) is extended by more than $\frac{1}{2}|\Pi_r-F|$ elements of $M'$. Recall that $\sigma'$ is as specified by $P$. No correct process will vote for $R$ unless  $\sigma'$ is the longest sequence  extended by more than $\frac{1}{2}|\Pi_r-F|$ elements of $M$, meaning that $\sigma'$ must extend $\sigma$ and $\mathtt{log}_{G_r}$. So far, we conclude that $\mathtt{log}_{G_{r+1}}$ extends $\sigma$ and $\mathtt{log}_{G_r}$. Since it follows by the same argument that $\mathtt{log}_{G_{s}}$ extends $\sigma$ for all $s>r$, the claim of the lemma holds.     
\end{proof} 


\begin{lemma} \label{vbound} 
If an execution of the wrapper has $r$ consistency violations, then the $r^{\text{th}}$ execution of the recovery procedure must begin at some timeslot (and so, by Lemma \ref{ml} must  also end at some timeslot). 
\end{lemma}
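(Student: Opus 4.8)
The plan is to prove the (equivalent, slightly stronger) statement that the number of consistency violations of $M_c$ is at most $\bar r$, where $\bar r$ is the number of executions of the recovery procedure that begin. Since the $s^{\text{th}}$ recovery procedure beginning forces $\mathcal{E}_{s+1}$ to begin (Lemma~\ref{ml}), which in turn forces the $(s-1)^{\text{th}}$ recovery procedure to have begun, the set of indices $s$ for which the $s^{\text{th}}$ recovery procedure begins is an initial segment $\{1,\dots,\bar r\}$; hence this bound immediately gives that an execution with $r$ consistency violations has the $r^{\text{th}}$ recovery procedure begin (and then end, by Lemma~\ref{ml}). If $\bar r=\infty$ there is nothing to prove, so assume it is finite; then $\mathcal{E}_1,\dots,\mathcal{E}_{\bar r+1}$ are exactly the executions of $\mathcal{P}$ that begin, and all the values $\Pi_s,\mathtt{log}_{G_s}$ for $s\le\bar r+1$ are globally defined (the remark following Lemma~\ref{ml}).

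I would first record two facts. \emph{(a) No forged finish-QCs.} For $s\le\bar r+1$, write $F_j$ ($j<s$) for the sets removed along the way, so $\Pi_{j+1}=\Pi_j-F_j$; these are pairwise disjoint sets of faulty processes with $|F_j|\ge\rho_C|\Pi_j|$, so the number $f^{(s)}$ of faulty processes in $\Pi_s$ satisfies $f^{(s)}=f_a-\sum_{j<s}|F_j|<(1-\rho_L)n-\sum_{j<s}|F_j|\le(1-\rho_L)|\Pi_s|$. Hence, for any $F\subseteq\Pi_s$ with $|F|\ge\rho_C|\Pi_s|$ for which $M_c$ is a proof of guilt, the faulty processes in $\Pi_s-F$ number $f^{(s)}-|F|$, and $2(f^{(s)}-|F|)<2(1-\rho_L)|\Pi_s|-2|F|=(1+\rho_C)|\Pi_s|-2|F|\le|\Pi_s|-|F|$, so fewer than $\tfrac12|\Pi_s-F|$ of the processes in $\Pi_s-F$ are faulty. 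Thus no valid finish-QC for an $s$-proposal can be assembled from faulty $r$-finish votes alone; any valid finish-QC for an $s$-proposal appearing in $M_c$ contains an $r$-finish vote by a correct process, which that process casts only while executing the $s^{\text{th}}$ recovery procedure. Consequently, when $\mathcal{F}$ (Algorithm~2) is run on any $M\subseteq M_c$, it can advance from stage $\mathtt{r}=s$ to stage $s+1$ only if the $s^{\text{th}}$ recovery procedure begins (and when it does, a routine induction using Lemma~\ref{ml}(ii) and the agreement on $\Pi_s,\mathtt{log}_{G_s}$ shows the $\Pi_{s+1},\mathtt{log}_{G_{s+1}}$ it computes are the globally defined ones); in particular $\mathcal{F}$ never passes stage $\bar r+1$ on inputs contained in $M_c$. \emph{(b) Violations get detected.} If some $M\subseteq M_c$ has a consistency violation with respect to $\mathcal{F}(\Pi_s,\mathtt{log}_{G_s})$, then the $s^{\text{th}}$ recovery procedure begins: otherwise, by (a) no correct process ever increments its counter past $\mathtt{r}=s$, so every correct process runs $\mathcal{E}_s$ forever once $\mathcal{E}_s$ begins and, by message gossiping, eventually has $\mathtt{M}_{i,s}$ contain the (without loss of generality $\mathcal{E}_s$-tagged) messages witnessing the violation, so it detects a violation in $\mathcal{E}_s$ — a contradiction.

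Next I would set up a rank. For $M\subseteq M_c$ let $\mathtt{r}(M)$ be the stage at which Algorithm~2 returns on input $M$, and put $\phi(M):=\mathtt{r}(M)-\tfrac12$ if it returns via the ``no consistency violation'' branch (so $\mathcal{F}(M)=\mathcal{F}(\Pi_{\mathtt{r}(M)},\mathtt{log}_{G_{\mathtt{r}(M)}})(M)$, an extension of $\mathtt{log}_{G_{\mathtt{r}(M)}}$) and $\phi(M):=\mathtt{r}(M)$ otherwise (so $\mathcal{F}(M)=\mathtt{log}_{G_{\mathtt{r}(M)}}$). Because ``$M$ has a consistency violation w.r.t.\ $\mathcal{F}(\Pi_s,\cdot)$'' and ``$M$ contains the unique (by Lemma~\ref{ml}(ii)) valid $s$-proposal carrying a finish-QC'' are monotone in $M\subseteq M_c$, the map $\phi$ is monotone: $M\subseteq M'\subseteq M_c$ implies $\phi(M)\le\phi(M')$. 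The two claims I would then verify are: (i) if $\mathcal{F}(M)\not\preceq\mathcal{F}(M')$ with $M\subseteq M'\subseteq M_c$ then $\phi(M)<\phi(M')$ (rule out $\phi(M)=\phi(M')$: in the half-integer case both values are $\mathcal{F}(\Pi_s,\cdot)$ evaluated on subsets of a violation-free set, hence comparable; in the integer case both values equal $\mathtt{log}_{G_s}$); and (ii) in that situation $\phi(M)$ is not an integer, since if $\mathcal{F}(M)=\mathtt{log}_{G_s}$ then, as every finalization function outputs an extension of its genesis log and $\mathtt{log}_{G_{t+1}}\succeq\mathtt{log}_{G_t}$ by Lemma~\ref{cons}, we get $\mathcal{F}(M')\succeq\mathtt{log}_{G_{\mathtt{r}(M')}}\succeq\mathtt{log}_{G_s}=\mathcal{F}(M)$, contradicting the violation.

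Finally, applying this to a chain $M_0\subset\cdots\subset M_k\subseteq M_c$ witnessing $k$ violations, (i) gives $\phi(M_0)<\cdots<\phi(M_k)$ and (ii) gives that each of $\phi(M_0),\dots,\phi(M_{k-1})$ is a half-integer $\mathtt{r}(M_j)-\tfrac12$ with $\mathtt{r}(M_j)$ a positive integer $\le\bar r+1$; moreover, by (b) and the definition of $\bar r$ no subset of $M_c$ has a violation w.r.t.\ $\mathcal{F}(\Pi_{\bar r+1},\cdot)$, so every $\phi$-value on subsets of $M_c$ is $\le\bar r+\tfrac12$, whence $\phi(M_{k-1})<\phi(M_k)\le\bar r+\tfrac12$ forces $\mathtt{r}(M_{k-1})\le\bar r$, and the $k$ distinct half-integers $\phi(M_0)<\cdots<\phi(M_{k-1})$ all lie in $\{\tfrac12,\tfrac32,\dots,\bar r-\tfrac12\}$, giving $k\le\bar r$. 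The main obstacle is exactly this last bookkeeping: a single execution $\mathcal{E}_s$, equivalently a single stage of $\mathcal{F}$, superficially hosts two kinds of violation — the rollback to $\mathtt{log}_{G_s}$ when a violation in $\mathcal{E}_s$ is first detected, and the jump from stage $s$ to $s+1$ — and one must see that at most one of them can appear along any $\subseteq$-chain in $M_c$; the rank $\phi$ together with $\mathtt{log}_{G_{s+1}}\succeq\mathtt{log}_{G_s}$ (Lemma~\ref{cons}) is what collapses the apparent factor-of-two loss and pins each violation to a distinct recovery procedure. Facts (a) and (b) are comparatively routine — an arithmetic computation using $\rho_C+2\rho_L=1$ and $f_a<(1-\rho_L)n$, and a standard gossip argument — but they are precisely what ties ``$\mathcal{F}$ advanced a stage, or saw a violation'' back to ``a recovery procedure began''.
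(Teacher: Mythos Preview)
Your proof is correct and follows essentially the same route as the paper: both track the stage Algorithm~2 reaches on a message set (the paper's $r^*$, your $\mathtt{r}(\cdot)$), use monotonicity together with Lemma~\ref{cons} to show that each violation along a $\subseteq$-chain forces a strict stage increase, and then bound the maximum stage by the number of recovery procedures that begin. Your half-integer rank $\phi$ is a tidy way to separate the two return branches of Algorithm~2 where the paper does a compressed case split in its last two sentences, and your facts~(a) and~(b) spell out what the paper invokes implicitly there.

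One small wrinkle: the arithmetic in your fact~(a) (``fewer than $\tfrac12|\Pi_s-F|$ of $\Pi_s-F$ are faulty'') is stated only for $F$ for which $M_c$ is a proof of guilt, but the paper's definition of a \emph{valid} finish-QC places no such constraint on the $F$ inside the $s$-proposal, so the sentence ``any valid finish-QC \ldots\ contains a vote by a correct process'' is not established by that computation alone. This does not damage your argument, because the conclusions you actually use do not depend on it: Algorithm~2 only examines finish-QCs after already detecting a stage-$s$ violation, and that violation (via your fact~(b)) already forces the $s$th recovery to begin; similarly, a correct process's counter $\mathtt{r}$ in Algorithm~1 can only advance past $s$ from inside the $s$th recovery procedure, so (b) does not rely on the finish-QC count either. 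The slip is cosmetic and the overall proof stands.
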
 
\begin{proof} 
For any set of messages $M$, let $r^*(M)$ be the greatest value taken by the variable $\mathtt{r}$ when Algorithm 2 is run with input $M$. From (ii) of Lemma \ref{ml}, it follows that if $M_0\subseteq M_1$ then $r^*(M_0)\leq r^*(M_1)$. 
If an execution of the wrapper has $r$ consistency violations, then there exist $M_0 \subset M_1 \subset \dots \subset M_r\subseteq M_c$ such that, for all $r'\in [0,r)$, $\mathcal{F}(M_{r'}) \not \preceq  \mathcal{F}(M_{r'+1})$. If there exists $r'<r-1$ and $s$ such that $s=j^*(M_{r'})=j^*(M_{r'+1})$, then $\mathcal{F}(M_{r'+1} )=\mathtt{log}_{G_s}$, and, by Lemma \ref{cons}, $\mathcal{F}(M) \succeq \mathcal{F}(M_{r'+1} )$ for all $M\supseteq M_{r'+1}$. This contradicts the fact that $\mathcal{F}(M_{r'+1} ) \not \preceq \mathcal{F}(M_{r'+2})$. 
We conclude that,  for all $r'\in [0,r-1)$, $r^*(M_{r'})<r^*(M_{r'+1})$. If $r^*(M_{r-1})=r^*(M_{r})$ then $r^*(M_r)$ is at least $r$ and $M_r$ has a consistency violation with respect to $\mathcal{F}(\Pi_r,\mathtt{log}_{G_r})$, meaning that the $r^{\text{th}}$ execution of the recovery procedure must begin at some timeslot, as claimed. If not, then  $r^*(M_r)>r$, which also means that the $r^{\text{th}}$ execution of the recovery procedure must begin at some timeslot. 
\end{proof} 

\begin{lemma}
The wrapper has recoverable consistency resilience $\geq g_1$ and also recoverable liveness resilience $\geq g_2$.
\end{lemma}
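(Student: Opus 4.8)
The plan is to track how the process sets $\Pi_1\supsetneq\Pi_2\supsetneq\cdots$ shrink across successive recovery rounds and to reduce both assertions to the consistency and liveness guarantees of $\mathcal{P}$ applied to the final execution $\mathcal{E}_{r+1}$. The key preliminary fact I would establish is a counting bound. The recursion $x_{r+1}=x_r+\rho_C(1-x_r)$ from~\eqref{eq:g} unwinds to $x_r=1-(1-\rho_C)^r$, and whenever $\mathcal{E}_{r+1}$ is reached, recovery rounds $1,\dots,r$ have all completed, so by Lemma~\ref{ml} each set $F_s:=\Pi_s-\Pi_{s+1}$ is the $F$-component of the unique $s$-proposal with a valid finish-QC. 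Validity condition~(ii) of $(r,v)$-proposals forces $|F_s|\ge\rho_C|\Pi_s|$, and validity condition~(iii), together with the fact that any valid QC produced in round~$s$ contains a vote by a correct process — which holds because, under the standing assumption of a $(1-\rho_L)$-bounded adversary, fewer than half of $\Pi_s-F_s$ is faulty once $|F_s|\ge\rho_C|\Pi_s|$ (the computation recalled in Section~\ref{rpi}), so such a correct voter will have checked~(iii) and its message set is contained in $M_c$ — shows that every process in $F_s$ is faulty. Hence the $F_s$ are pairwise disjoint sets of faulty processes, $|\Pi_{r+1}|\le(1-\rho_C)^r n$, and, writing $k:=n-|\Pi_{r+1}|$ and $f$ for the total number of faulty processes, $x_r n\le k\le f$.

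For recoverable consistency resilience $\ge g_1$, I would fix $r$ and assume the adversary is $g_1(r)$-bounded, so $f<g_1(r)\,n\le x_{r+1}n=(\rho_C+(1-\rho_C)x_r)n$. If $\mathcal{E}_{r+1}$ is reached, the counting bound and $k\ge x_r n$ give $f-k<\rho_C(n-k)=\rho_C|\Pi_{r+1}|$, so the adversary is $\rho_C$-bounded within $\Pi_{r+1}$; since consistency of $\mathcal{P}$ is a safety property requiring no timing assumption, $\mathcal{E}_{r+1}$ has no consistency violation with respect to $\mathcal{F}(\Pi_{r+1},\mathtt{log}_{G_{r+1}})$, so no correct process ever detects one and the $(r+1)$-th recovery round never begins (and if $\mathcal{E}_{r+1}$ is never reached, a fortiori it never begins). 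By the contrapositive of Lemma~\ref{vbound}, the execution then has at most $r$ consistency violations, and this is uniform in $n$.

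For recoverable liveness resilience $\ge g_2$, I would fix $r$, assume the adversary is $g_2(r)$-bounded (so $f<g_2(r)\,n\le(x_r+\rho_L(1-x_r))n$), and assume the execution has precisely $r$ consistency violations. By Lemmas~\ref{vbound} and~\ref{ml}, rounds $1,\dots,r$ complete and $\mathcal{E}_{r+1}$ begins; since the execution has precisely $r$ violations, $\mathcal{E}_{r+1}$ has no consistency violation with respect to $\mathcal{F}(\Pi_{r+1},\mathtt{log}_{G_{r+1}})$, so no further recovery round begins and $\mathcal{E}_{r+1}$ is the final execution. The counting bound with $k\ge x_r n$ (and $k\le f<\rho n$, so $|\Pi_{r+1}|=n-k>(1-\rho)n>0$) yields $f-k<\rho_L(n-k)=\rho_L|\Pi_{r+1}|$, so the adversary is $\rho_L$-bounded within $\Pi_{r+1}$ and $\mathcal{P}$'s liveness applies to $\mathcal{E}_{r+1}$ once it runs in the partially synchronous regime. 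Combining this with Lemma~\ref{cons} ($\mathtt{log}_{G_{r+1}}\succeq\mathtt{log}_{G_r}$, so earlier transactions are preserved), the identity $\mathtt{log}_i=\mathcal{F}(\mathtt{M}_i)$, and the fact that correct processes feed all received transactions into the current execution of $\mathcal{P}$, every transaction received by a correct process is eventually finalized in $\mathcal{E}_{r+1}$ and hence appears in every correct process's log — which is exactly liveness for the wrapper.

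I expect the two delicate points to be: (a) verifying that the ``fewer than half of $\Pi_s-F_s$ is faulty'' bound — which underpins both ``$F_s$ consists of faulty processes'' and the inductive use of $|F_s|\ge\rho_C|\Pi_s|$ — holds for every execution $\mathcal{E}_s$ and not only for $\mathcal{E}_1$; this should follow from $\rho\le 1-\rho_L=(1+\rho_C)/2$ together with $|\Pi_s-F_s|\le(1-\rho_C)|\Pi_s|$; and (b) turning the hypothesis ``the execution has precisely $r$ consistency violations'' (a statement about the wrapper's $\mathcal{F}$ and $M_c$) into the cleaner ``$\mathcal{E}_{r+1}$ has no consistency violation with respect to $\mathcal{F}(\Pi_{r+1},\mathtt{log}_{G_{r+1}})$'', which amounts to re-running the bookkeeping in the proof of Lemma~\ref{vbound} in the forward direction.
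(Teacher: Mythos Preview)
Your proposal is correct and follows essentially the same approach as the paper: use the lower bound $|\Pi-\Pi_{r+1}|\ge x_r n$ together with Lemma~\ref{vbound} to reduce both claims to the consistency and liveness resilience of~$\mathcal{P}$ on the final execution~$\mathcal{E}_{r+1}$. Your write-up is in fact considerably more detailed than the paper's terse argument---in particular, your explicit counting bound $x_r n\le k\le f$ and your flagged point~(a) make precise a step the paper leaves implicit (that the removed processes are faulty, which the paper asserts without further comment here, relying on the ``this QC must include at least one vote by a correct process'' claim made in the proof of Lemma~\ref{cons}).
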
 
\begin{proof} 
Recall that, in Section \ref{rmsec}, we set $ x_0=0$ and $x_{r+1}=x_r+\rho_C(1-x_r)$, and then defined:
\[ g_1(r)=\text{min} \{ x_{r+1}, 1-\rho_L \}, \hspace{0.3cm}  g_2(r)=\text{min} \{ x_r+\rho_L(1-x_r),1-\rho_L \}. \]
Note that $x_r$ lower bounds the fraction of the processes removed to form $\Pi_{r+1}$, i.e. $|\Pi-\Pi_{r+1}|\geq x_r n$.  By Lemma, \ref{vbound}, if there exist $r$ consistency violations, then the $r^{\text{th}}$ execution of the recovery procedure must end at some timeslot. If the adversary is $g_2(k)$-bounded, and since $\mathcal{P}$ has liveness resilience $\rho_L$, it follows that liveness must hold. If the adversary is $g_1(r)$-bounded, then  since $\mathcal{P}$ has consistency  resilience $\rho_C$, the $(r+1)^{\text{th}}$ execution of the recovery procedure cannot begin at any timeslot. From Lemma \ref{vbound}, it follows that there are at most $r$ consistency violations. 
\end{proof} 

\begin{lemma}
 The wrapper has recovery time $O(f_a\Delta^*)$ with liveness parameter $\ell$, where $f_a$ is the actual (unknown) number of faulty processes. 
It also has probabilistic recovery time $O(\Delta^*\text{log}\frac{1}{\varepsilon})$ with liveness parameter $\ell$. 
\end{lemma}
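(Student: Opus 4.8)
The plan is to derive both bounds from the structural facts already in hand --- principally Lemma~\ref{ml}, Lemma~\ref{cons} and Lemma~\ref{vbound} --- by tracking how quickly the \emph{last} execution of the recovery procedure completes. Fix an execution; if it has infinitely many consistency violations then $(\dagger_{d,\ell})$ holds vacuously, so suppose it has exactly $r$, and let $t$ be least with $M_c(t)$ having $r$ consistency violations. First I would show that the $r$-th execution of the recovery procedure begins by time $t+O(\Delta^*)$. Running the bookkeeping of the proof of Lemma~\ref{vbound} on $M_c(t)$ shows $r^*(M_c(t))\ge r$; hence $M_c(t)$ contains, for each $s<r$, a (unique) $s$-proposal with a valid finish-QC, and it contains $\mathcal{E}_r$-tagged messages witnessing a consistency violation with respect to $\mathcal{F}(\Pi_r,\mathtt{log}_{G_r})$. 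Each such finish-QC contains a vote of a correct process (the adversary controls fewer than half of $\Pi_s-F$), so correct processes did traverse recoveries $1,\dots,r-1$; combining this with Lemma~\ref{ml}(iv) and the gossip convention, every correct process has begun $\mathcal{E}_r$ and has the witnessing messages in its local $\mathtt{M}_{i,r}$ by time $t+\Delta^*$, at which point it enters recovery~$r$.

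Let $t_0\le t+O(\Delta^*)$ be the start of recovery $r$ and let $v_0$ be least with $\mathtt{lead}(r,v_0)$ correct. Since $\Pi_r^*(1),\dots,\Pi_r^*(v_0-1)$ are distinct faulty processes, $v_0-1$ is at most the number of faulty processes in $\Pi_r$, hence at most $f_a$. By Lemma~\ref{ml}(iii) all correct processes receive a valid finish-QC for the unique $r$-proposal by $t_0+2\Delta^*+8v_0\Delta^*=t+O(f_a\Delta^*)$, and by Lemma~\ref{ml}(iv) all of them then start $\mathcal{E}_{r+1}$ with a common $\Pi_{r+1},\mathtt{log}_{G_{r+1}}$ within a further $\Delta^*$. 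Since the execution has exactly $r$ violations, Lemma~\ref{vbound} (applied to $r+1$) shows the $(r+1)$-th recovery never begins, so no correct process ever leaves $\mathcal{E}_{r+1}$; applying the liveness parameter $\ell$ of $\mathcal{P}$ to the execution $\mathcal{E}_{r+1}$, whose effective start-time for every correct process is at most $t+O(f_a\Delta^*)$, and recalling $\mathtt{log}_{G_{r+1}}\succeq\mathtt{log}_{G_r}$ (Lemma~\ref{cons}), gives that any transaction received by a correct process lies in every correct log by time $\max\{t+O(f_a\Delta^*),\text{GST},t'\}+\ell$ --- that is, $(\dagger_{d,\ell})$ with $d=O(f_a\Delta^*)$.

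For the probabilistic bound the only change is the estimate of $v_0$. Fix the corrupted set $B$ with $|B|=f_a<(1-\rho_L)n$; then $\Pi^*$ is uniform. The key observation is that every $\Pi_r$ arising in any execution contains all correct processes (each removed set $F_s$ consists of provably faulty processes), so $\Pi_r^*$ lists the correct processes in the same relative order as $\Pi^*$ and never skips one; hence $v_0-1$ is at most $X$, the number of faulty processes preceding the first correct process in $\Pi^*$, a quantity independent of the execution and of $r$. Since $\Pr[X\ge k]\le (f_a/n)^k<(1-\rho_L)^k$, taking $k=O(\log\tfrac{1}{\varepsilon})$ makes $\Pr[X\ge k]<\varepsilon$; so with probability at least $1-\varepsilon$ over $\Pi^*$ we have $v_0=O(\log\tfrac{1}{\varepsilon})$ in the last recovery of every execution consistent with $B$ and $\Pi^*$, and the argument above then yields $(\dagger_{d_\varepsilon,\ell})$ with $d_\varepsilon=O(\Delta^*\log\tfrac{1}{\varepsilon})$.

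I expect the main obstacle to be the first step: pinning down that the last recovery procedure begins within $O(\Delta^*)$ of the last consistency violation. This requires translating the purely combinatorial statement ``$M_c(t)$ has $r$ consistency violations'' into claims about which execution $\mathcal{E}_r$ correct processes are running and what sits in their local variables $\mathtt{M}_{i,r}$, which leans on the monotonicity of $r^*$ from Lemma~\ref{vbound} together with Lemma~\ref{ml}(iv) and gossip. A secondary delicate point is keeping the stochastic bound on $v_0$ uniform over all executions and all values of $r$; that is exactly what the observation ``$\Pi_r$ always contains every correct process'' secures.
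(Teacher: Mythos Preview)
Your argument is correct; the worst-case bound is obtained exactly as in the paper, by appealing to Lemma~\ref{ml}(iii) and the observation that $v_0\le f_a+1$, though you spell out in more detail than the paper why the $r$-th recovery must begin within $O(\Delta^*)$ of the time $t$ of the last consistency violation (the paper simply asserts that the bound ``follows directly'' from Lemma~\ref{ml}(iii)).

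For the probabilistic bound you take a genuinely different route. The paper first observes that, because $\rho_C>0$, some power $(1-\rho_C)^{r_{\max}}$ drops below~$\rho_L$, so any execution with a $(1-\rho_L)$-bounded adversary has at most $r_{\max}=O(1)$ recoveries; it then union-bounds the event ``some recovery has its first $d$ leaders all faulty'' by $O(r_{\max}\cdot\rho^d)=O(\rho^d)$, from which $d=O(\log\tfrac{1}{\varepsilon})$ follows. You instead note that $(\dagger_{d,\ell})$ depends only on the \emph{last} recovery, and that because every removed process is provably faulty, $\Pi_r$ always contains all correct processes; hence the index $v_0$ of the first correct leader in $\Pi_r^*$ is at most $X+1$, where $X$ is the number of faulty processes preceding the first correct one in $\Pi^*$. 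This bound is uniform over all executions and all~$r$ consistent with the chosen corrupted set and $\Pi^*$, so a single tail bound $\Pr[X\ge k]\le(f_a/n)^k<(1-\rho_L)^k$ suffices with no union bound. Your argument is a bit cleaner and makes the role of the static adversary transparent; the paper's argument, on the other hand, does not rely on the observation that only faulty processes are removed, and would adapt more easily if one wished to bound the \emph{total} time spent in recovery rather than just the last one.
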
 
\begin{proof} 
The fact that the wrapper has  recovery time $O(f_a\Delta^*)$ with liveness parameter $\ell$ follows directly from (iii) of Lemma \ref{ml}, since views are of length $O(\Delta^*)$. To establish the claim regarding probabilistic recovery time, note that we required $\rho_C>0$ in the definition of optimal resilience. Some finite power of $(1-\rho_C)$ is therefore less than $\rho_L$, so there exists $r$ such that any execution in which the adversary is $1-\rho_L$-bounded can have most $r$ consistency violations. If the adversary is $\rho$-bounded, then the probability that, for one of the (at most $r$) executions of the recovery procedure, the first $d$ views all have faulty leaders is $O(r \rho^d)=O(\rho^d)$ for fixed $\rho_C$. Since each view is of length $O(\Delta^*)$, it follows from  (iii) of Lemma \ref{ml} that the wrapper therefore has probabilistic recovery time $O(\Delta^*\text{log}\frac{1}{\varepsilon})$ with liveness parameter $\ell$, as claimed. 
\end{proof}

\section{Related Work} 

\noindent \textbf{Positive results}.  A sequence of papers, including Buterin and
Griffith~\cite{buterin2017casper}, Civit et
al.~\cite{civit2021polygraph}, and Shamis et al.~\cite{shamis2022ia},
describe protocols that satisfy accountability.  Sheng
et al.~\cite{sheng2021bft} analyze accountability for well-known
permissioned protocols such as HotStuff~\cite{yin2019hotstuff}, PBFT
\cite{castro1999practical}, Tendermint
\cite{buchman2016tendermint,buchman2018latest}, and Algorand
\cite{chen2018algorand}. Civit et
al.~\cite{civit2022crime,civit2023easy} describe generic
transformations that take any permissioned protocol designed for the
partially synchronous setting
and provide a corresponding accountable version.
These papers do not describe how to reach consensus on which guilty parties to remove in the event of a consistency violation (i.e.\ how to achieve `recovery'), and thus fall short of our goals here.  One exception to this point is the ZLB
protocol of Ranchal-Pedrosa and Gramoli~\cite{ranchal2020zlb}, but the ZLB protocol only achieves recovery if the adversary controls
less than a $5/9$ fraction of participants, and does not achieve bounded rollback. Freitas de~Souza et al.~\cite{de2021accountability} also describe a process for removing guilty parties in a protocol for lattice agreement (this abstraction is
weaker than SMR/consensus and can be implemented in an asynchronous system), but their protocol assumes an honest majority and the paper does not consider bounded rollback. 
Sridhar et
al.~\cite{sridhar2023better} specify a ``gadget'' that can be
applied to blockchain protocols operating in the synchronous setting
to reboot and maintain consistency after an attack, but they do not describe how to implement recovery and assume that an honest majority is somehow
reestablished out-of-protocol.

Budish et al.~\cite{budish2024economic} consider ``slashing'' in
proof-of-stake protocols in the ``quasi-permissionless'' setting.
Their main positive result is a protocol that, in the same timing
model considered in this paper (with additional guarantees provided
pre-GST message delays are bounded by a known parameter~$\Delta^*$),
guarantees what they call the ``EAAC property''---honest players never
have their stake slashed, and some Byzantine stake is guaranteed to be
slashed following a consistency violation. Budish et
al.~\cite{budish2024economic} do not contemplate repeated 
consistency violations, a prerequisite to the notions of recoverable
consistency and liveness that are central to this paper.
To the extent that it makes sense to compare their ``recovery
procedure'' with our ``wrapper,'' our protocol is superior in several
respects, with worst-case recovery time $O(n\Delta^*)$ (as opposed
to~$O(n^2\Delta^*)$); probabilistic recovery time 
$O(\Delta^*\text{log}\frac{1}{\varepsilon})$, where $\varepsilon$ is
an
error-probability bound (as opposed to
$O(n\Delta^*\text{log}\frac{1}{\varepsilon})$); and rollback
$2\Delta^*$ (as opposed to unbounded rollback).

Gong et al. \cite{tian2025}  consider the task of recovery and achieve results that are incomparable to those in this paper since they consider different timing and fault models: they restrict to the case of ``ABC faults''\footnote{Processes subject to ``ABC faults’’  can only display faulty behaviour so as to cause consistency failures, and not to threaten liveness.} but consider full partial synchrony, meaning that it is not possible to achieve bounded rollback.  Given this setup, their protocol satisfies the condition that after a single execution of the recovery procedure, a 5/9-bounded adversary is unable to cause further consistency and liveness violations, and after two executions of the recovery procedure a 2/3-bounded adversary is  unable to cause further consistency and liveness violations.  The authors also describe sufficient conditions for a
general BFT SMR protocol to allow for ``complete and sound
fault detection’’ in the case of consistency violations,  even when the actual (unknown) number of faulty processes is as large as  $n-2$.

Distinct from our aims here, i.e., the \emph{removal} of guilty parties so as to achieve recovery in the event of a consistency violation, a number of papers consider the ability of protocols to recover from temporary dishonest majorities (without consideration of any mechanism to ensure that the dishonest majority is temporary). Avarikioti et al.~\cite{avarikioti2019bitcoin} establish a formal sense in which Bitcoin is secure under temporary dishonest majority. We note that, since  Bitcoin is dynamically available \cite{lewis2023permissionless}, it cannot be accountable \cite{neu2022availability}, meaning that there can be no mechanism to remove guilty parties (and only guilty parties) in the event of a consistency violation. Badertscher et al.\ \cite{badertscher2024consensus} extend this analysis to consider dynamically available proof-of-work and proof-of-stake protocols more generally and also establish negative results for BFT-style protocols that do not make use of accountability to remove guilty parties (as we do here).


Prior to the study of accountability, Li and Mazieres \cite{li2007beyond} considered how to design BFT protocols that still offer certain guarantees when more than $f$ failures occur. The describe a protocol called BFT2F which has the same liveness and consistency guarantees as PBFT when no more than $f<n/3$ players fail; with more than $f$ but no more than $2f$ failures, BFT2F prohibits malicious players from making up operations that clients have never issued and prevents certain kinds of consistency violations.

\vspace{0.2cm} 
\noindent \textbf{Negative results}. There are a number of papers that describe negative results relating to accountability and the ability to punish guilty parties in the `permissionless setting' (for a definition of the permissionless setting see \cite{lewis2023permissionless}). 
Neu et al.~\cite{neu2022availability} 
prove that no protocol operating in the `dynamically available' setting (where the number of `active' parties is unknown)
can provide accountability.
The authors then provide an approach to
addressing this limitation by describing a ``gadget'' that checkpoints
a longest-chain protocol. The ``full ledger'' is then live in the
dynamically available setting, while the checkpointed prefix ledger
provides accountability.   Tas et
al.~\cite{tas2022babylon,tas2023bitcoin}  and Budish et al.~\cite{budish2024economic} also prove negative
results regarding the possibility of punishing guilty participants of proof-of-stake protocols before they are able to cash out of their position.



\section{Final Comments} 

For the sake of simplicity, we have presented our wrapper in the `permissioned' setting (with a fixed and known set of always active participants). However, since the procedure produces certificates that suffice to verify each new genesis log and the set of processes removed after each consistency violation, it can also be applied directly to proof-of-stake protocols in the quasi-permissionless setting \cite{lewis2023permissionless}. Specifically, our procedure can be used to give a practical replacement for  the (proof-of-concept) recovery procedure of Budish, Lewis-Pye and Roughgarden \cite{budish2024economic} within the formal setup they consider. 

While Theorems \ref{ir1}-\ref{ir3} show senses in which Theorem \ref{mt} is tight, a number of natural questions remain. For example, our recovery procedure implements a synchronous protocol and has recovery time $O(f_a\Delta^*)$. While Theorem \ref{ir1} establishes that some bound on message delays is required if we are to achieve bounded rollback, one might still make use of a \emph{recovery procedure} that does not require synchrony: could such a procedure achieve recovery time $O(f_a\Delta)$ after GST? Also, while our recovery procedure has rollback bounded by $2\Delta^*$, Theorem \ref{ir1} only establishes a lower bound of $\Delta^*$. Is this lower bound tight, or is $2\Delta^*$ optimal? 

\section*{Acknowledgements} 


We thank Ittai Abraham for inspiring conversations during the early
phases of this work. The research of the second author at Columbia
University is supported in part by NSF awards
CCF-2006737 and CNS-2212745. The second author is also Head of
Research at a16z Crypto, a
venture capital firm with investments in blockchain protocols.


\bibliographystyle{ACM-Reference-Format}


\end{document}